\newtheorem{theorem}{Theorem}
\newtheorem{definition}[theorem]{Definition}
\newtheorem{lemma}[theorem]{Lemma}
\newtheorem{proposition}[theorem]{Proposition}
\newenvironment{proof}[1][Proof]{\textbf{#1.} }{\ \rule{0.5em}{0.5em}}
\begin{document}

\author{M. Mond \thanks{%
The Pearlstone Center for Aeronautical Engineering Studies, Department of
Mechanical Engineering, Ben-Gurion University of the Negev, Beer-Sheva,
Israel. E-mail: mond@bgu.ac.il} \and V. S. Borisov \thanks{%
E-mail: viatslav@bgu.ac.il}}
\title{On stability of difference schemes. Central schemes for hyperbolic
conservation laws with source terms}
\maketitle

\begin{abstract}
The stability of difference schemes for, in general, hyperbolic systems of
conservation laws with source terms are studied. The basic approach is to
investigate the stability of a non-linear scheme in terms of its
corresponding scheme in variations. Such an approach leads to application of
the stability theory for linear equation systems to establish stability of
the corresponding non-linear scheme. It is established the notion that a
non-linear scheme is stable if and only if the corresponding scheme in
variations is stable.

A new modification of the central Lax-Friedrichs (LxF) scheme is developed
to be of the second order accuracy. A monotone piecewise cubic interpolation
is used in the central schemes to give an accurate approximation for the
model in question. The stability of the modified scheme are investigated.
Some versions of the modified scheme are tested on several conservation
laws, and the scheme is found to be accurate and robust.

As applied to hyperbolic conservation laws with, in general, stiff source
terms, it is constructed a second order nonstaggered central scheme based on
operator-splitting techniques.
\end{abstract}


\section{Introduction\label{Introduction}}

We are mainly concerned with the stability of difference schemes for
hyperbolic systems of conservation laws with source terms. Such systems are
used to describe many physical problems of great practical importance in
magneto-hydrodynamics, kinetic theory of rarefied gases, linear and
nonlinear waves, viscoelasticity, multi-phase flows and phase transitions,
shallow waters, etc. (see, e.g., \cite{Bereux and Sainsaulieu 1997}, \cite
{Caflisch at el. 1997}, \cite{Godlewski and Raviart 1996}, \cite{Jin Shi
1995}, \cite{Kurganov and Tadmor 2000}, \cite{LeVeque 2002}, \cite{Monthe
2003}, \cite{Naldi and Pareschi 2000}, \cite{Pareschi Lorenzo 2001}, \cite
{Pareschi and Russo 2005}). We will consider a system of hyperbolic
conservation laws written as follows (e.g., \cite{Godlewski and Raviart 1996}%
, \cite{LeVeque 2002}) 
\begin{equation}
\frac{\partial \mathbf{u}}{\partial t}+\sum_{j=1}^{N}\frac{\partial }{%
\partial x_{j}}\mathbf{f}_{j}\left( \mathbf{u}\right) =\frac{1}{\tau }%
\mathbf{q}\left( \mathbf{u}\right) ,\ 0<t\leq T_{\max },\ \left. \mathbf{u}%
\left( \mathbf{x},t\right) \right| _{t=0}=\mathbf{u}^{0}\left( \mathbf{x}%
\right) ,  \label{INA10}
\end{equation}
where $\mathbf{x\equiv }\left\{ x_{1},x_{2},\ldots ,x_{N}\right\} ^{T}\in 
\mathbb{R}^{N}$, $\mathbf{u}=\left\{ u_{1},u_{2},\ldots ,u_{M}\right\} ^{T}$
is a vector-valued function from $\mathbb{R}^{N}$ $\times $ $[0,+\infty )$
into an open subset $\Omega _{\mathbf{u}}\subset \mathbb{R}^{M}$, $\mathbf{f}%
_{j}\left( \mathbf{u}\right) =\left\{ f_{1j}\left( \mathbf{u}\right) \right.
,$ $f_{2j}\left( \mathbf{u}\right) ,$ $\ldots ,$ $\left. f_{Mj}\left( 
\mathbf{u}\right) \right\} ^{T}$ is a smooth function (flux-function) from $%
\Omega _{\mathbf{u}}$ into $\mathbb{R}^{M}$, $\mathbf{q}\left( \mathbf{u}%
\right) =\left\{ q_{1}\left( \mathbf{u}\right) ,q_{2}\left( \mathbf{u}%
\right) ,\ldots ,q_{M}\left( \mathbf{u}\right) \right\} ^{T}$ denotes the
source term, $\tau >0$ denotes the stiffness parameter, $\mathbf{u}%
^{0}\left( \mathbf{x}\right) $\ is of compact support. We will assume that $%
\tau =const$ without loss of generality. In what follows $\left\| \mathbf{M}%
\right\| _{p}$ denotes the matrix norm of a matrix $\mathbf{M}$ induced by
the vector norm $\left\| \mathbf{v}\right\| _{p}$ $=\left( \sum_{i}\left|
v_{i}\right| ^{p}\right) ^{1/p}$, and $\left\| \mathbf{M}\right\| $ denotes
the matrix norm induced by a prescribed vector norm. $\mathbb{R}$ denotes
the field of real numbers.

For studying stability and monotonicity of non-linear schemes, the well
known notion of total variation diminishing (TVD, see, e.g., \cite{Godlewski
and Raviart 1996}, \cite{LeVeque 2002}) turns out to be an useful tool.
Actually, the following property

\begin{equation}
\left\| \mathcal{N}\left( \mathbf{v}+\delta \mathbf{v}\right) -\mathcal{N}%
\left( \mathbf{v}\right) \right\| \leq \left( 1+\alpha \Delta t\right)
\left\| \delta \mathbf{v}\right\|  \label{IN50}
\end{equation}
is sufficient for stability of a two-step method \cite{LeVeque 2002},
however it is, in general, difficult to obtain. Here $\Delta t$ denotes the
time increment, $\alpha $ is a constant independent of $\Delta t$ as $\Delta
t\rightarrow 0$, $\mathbf{v}$ and $\delta \mathbf{v}$ are any two grid
functions ($\delta \mathbf{v}$ will often be referred to as the variation of
the grid function $\mathbf{v}$), $\mathcal{N}$ denotes the scheme operator.
At the same time, the stability of linearized version of the non-linear
scheme is generally not sufficient to prove convergence \cite{Harten 1984}, 
\cite{LeVeque 2002}. Instead, the TV-stability adopted in \cite{Harten 1984}
(see also \cite[s. 8.3.5]{LeVeque 2002}) makes it possible to prove
convergence (to say, TV-convergence) of non-linear scalar schemes with ease.
However, the TVD property is a purely scalar notion that cannot, in general,
be extended for non-linear systems of equations, as the true solution itself
is usually not TVD \cite{Godlewski and Raviart 1996}, \cite{LeVeque 2002}.
Moreover, one can see in \cite[pp. 1578-1581]{Borisov and Sorek 2004} that a
TVD scheme can be non-convergent in, at least, $L_{\infty }$, in spite that
the scheme is TV-stable. Such a phenomenon is, in all likelihood, caused by
the fact that TV is not a norm, but a semi-norm. 

Nowadays, there exists a few methods for stability analysis of some classes
of nonlinear difference schemes approximating systems of PDEs (see, e.g., 
\cite{Ganzha and Vorozhtsov 1996b}, \cite{Gil' 2007}, \cite{LeVeque 2002}, 
\cite{Morton 1996}, \cite{Naterer and Camberos 2008}, \cite{Samarskii 2001}
and references therein). It is noted in \cite{Gil' 2007} that the problem of
stability analysis is still one of the most burning problems, because of the
absence of its complete solution. In particular, as noted in \cite{Ganzha
and Vorozhtsov 1996b} in this connection, the vast majority of difference
schemes, currently in use, have still not been analyzed. LeVeque \cite
{LeVeque 2002} noted as well that, in general, no numerical method for
non-linear systems of equations has been proven to be stable. There is not
even a proof that the first-order Godunov method converges on general
systems of non-linear conservation laws \cite[p. 340]{LeVeque 2002}. Thus, a
different approach to testing scheme stability must be adopted to prove
convergence of non-linear schemes for systems of PDEs. The notion of scheme
in variations (or variational scheme \cite{Borisov and Sorek 2004}, \cite
{Borisov V.S. 2003}) has, in all likelihood, much potential to be an
effective tool for studying stability of nonlinear schemes. Such an approach
goes back to the one suggested by Lyapunov (1892), namely, to investigate
stability by the first approximation. This idea has long been exploited for
investigation of the stability of motion \cite{Gil' 1998}. An approach to
investigate non-linear difference schemes for monotonicity in terms of
corresponding variational schemes was suggested in \cite{Borisov and Sorek
2004}, \cite{Borisov V.S. 2003}. The advantage of such an approach is that
the variational scheme will always be linear and, hence, enables the
investigation of the monotonicity for nonlinear operators using linear
patterns. It is proven for the case of explicit schemes that the
monotonicity of a variational scheme will guarantee that its original scheme
will also be monotone \cite{Borisov and Sorek 2004}. We establish the notion
that the stability of a scheme in variations is necessary and sufficient for
the stability of its original scheme (see Section \ref{Stability of
difference schemes}, Theorem \ref{iffStab}).

An extensive literature is devoted to central schemes, since these schemes
are attractive for various reasons: no Riemann solvers, characteristic
decompositions, complicated flux splittings, etc., must be involved in
construction of a central scheme (see, e.g., \cite{Balaguer and Conde 2005}, 
\cite{Kurganov and Tadmor 2000}, \cite{Kurganov and Levy 2000}, \cite
{LeVeque 2002}, \cite{Pareschi Lorenzo 2001}, \cite{Pareschi et al. 2005}
and references therein), and hence such schemes can be implemented as a
black-box solvers for general systems of conservation laws \cite{Kurganov
and Tadmor 2000}. Let us, however, note that the numerical domain of
dependence \cite[p. 69]{LeVeque 2002} for a central scheme approximating,
e.g., a scalar transport equation coincides with the numerical domain of
dependence for a standard explicit scheme approximating diffusion equations 
\cite[p. 67]{LeVeque 2002}. Such a property is inherent to central schemes
in contrast to, e.g., the first-order upwind schemes \cite[p. 73]{LeVeque
2002}. Hence, central schemes do not satisfy the long known principle (e.g., 
\cite[p. 304]{Anderson et al. 1984.}) that derivatives must be correctly
treated using type-dependent differences, and hence there is a risk for
every central scheme to exhibit spurious solutions. The results of
simulations in \cite{Nessyahu and Tadmor 1990} can be seen as an
illustration of the last assertion. Notice, all versions of the, so called,
Nessyahu-Tadmor (NT) central scheme, in spite of sufficiently small CFL
(Courant-Friedrichs-Lewy \cite{LeVeque 2002}) number ($Cr=0.475$), exhibit
spurious oscillations in contrast to the second-order upwind scheme ($%
Cr=0.95 $). The first order, $O(\Delta t+\Delta x)$, LxF scheme exhibits the
excessive numerical viscosity. Thus, the central scheme should be chosen
with great care to reflect the true solution and to avoid significant but
spurious peculiarities in numerical solutions.

Let us note that LxF scheme -- the forerunner for central schemes \cite
{Balaguer and Conde 2005}, \cite{Kurganov and Tadmor 2000} -- does not
produce spurious oscillations. While, from the pioneering works of Nessyahu
and Tadmor \cite{Nessyahu and Tadmor 1990} and on, the higher order versions
of LxF scheme can produce spurious oscillations. The reason has to do with a
negative numerical viscosity introduced to obtain a higher order accurate
scheme (for more details, see Section \ref{COSN}). Let us note that there is
a possibility to increase the scheme's order of accuracy, up to $O((\Delta
t)^{2}+\left( \Delta x\right) ^{2})$, by introducing an additional
non-negative numerical viscosity into the scheme. Such an approach is
similar to the vanishing viscosity method \cite{Godlewski and Raviart 1996}, 
\cite{LeVeque 2002}, and hence possesses its advantages, yet it appears to
be free of the disadvantages of this method, since the additional viscosity
term is not artificial. With this approach, the second order scheme is
developed in Section \ref{COSN}, where sufficient conditions for stability
of the scheme are found. The scheme is tested on several conservation laws
in Section \ref{Exemplification and discussion}.

A stable numerical scheme may yield spurious results when applied to a stiff
hyperbolic system with relaxation (see, e.g., \cite{Ahmad and Berzins 2001}, 
\cite{Aves Mark A. et al. 2000}, \cite{Bereux and Sainsaulieu 1997}, \cite
{Caflisch at el. 1997}, \cite{Du Tao et al. 2003}, \cite{Jin Shi 1995}, \cite
{Pember 1993}, \cite{Pember 1993a}). Specifically, spurious numerical
solution phenomena may occur when underresolved numerical schemes (i.e.,
insufficient spatial and temporal resolution) are used (e.g., \cite{Ahmad
and Berzins 2001}, \cite{Jin Shi 1995}, \cite{JIN AND LEVERMORE 1996}, \cite
{Naldi and Pareschi 2000}). However, during a computation, the stiffness
parameter may be very small, and, hence, to resolve the small stiffness
parameter, we need a huge number of time and spatial increments, making the
computation impractical. Hence, we are interested to solve the system, (\ref
{INA10}), with underresolved numerical schemes. It is significant that for
relaxation systems a numerical scheme must possess a discrete analogy to the
continuous asymptotic limit, because any scheme violating the correct
asymptotic limit leads to spurious or poor solutions (see, e.g., \cite
{Caflisch at el. 1997}, \cite{Jin Shi 1995}, \cite{Jin Shi et al. 2000}, 
\cite{Naldi and Pareschi 2000}, \cite{Pareschi Lorenzo 2001}). Most methods
for solving such systems can be described as operator splitting ones, \cite
{Du Tao et al. 2003}, or methods of fractional steps, \cite{Bereux and
Sainsaulieu 1997}. After operator splitting, one solves the advection
homogeneous system, and then the ordinary differential equations associated
with the source terms. As reported in \cite{Gosse L. 2000}, this approach is
well suited for the stiff systems. We are mainly concerned with such an
approach in Section \ref{OSS}.

\section{Stability of difference schemes\label%
{Stability of difference schemes}}

Let us consider the following non-linear explicit scheme arising, e.g., in
numerical analysis of nonlinear PDE systems: 
\begin{equation}
\mathbf{v}_{i}^{n+1}=\mathbf{H}_{i}^{n}(\mathbf{v}_{1}^{n},\mathbf{v}%
_{2}^{n},\ldots ,\mathbf{v}_{I}^{n}),\ \mathbf{H}_{i}^{n}:\Omega
_{n}\subseteq \mathbb{R}^{N}\rightarrow \mathbb{R}^{N_{0}},\ i\in \omega
_{1},\ n,n+1\in \omega _{2},  \label{NSCS10}
\end{equation}
where  $\mathbf{v}_{i}^{n}\in \mathbb{R}^{N_{0}}$ denotes a vector-valued
grid function, $N=N_{0}I$, $i\in \omega _{1}$ denotes a node of the grid $%
\omega _{1}\equiv \left\{ 1,2,\ldots ,I\right\} $, $n\in \omega _{2}$
denotes a node (time level) of the grid $\omega _{2}\equiv \left\{
0,1,\ldots ,M\right\} $, $\mathbf{H}_{i}^{n}$$=$ $\left\{
H_{i,1}^{n},\right. H_{i,2}^{n},$ $\ldots ,$ $\left. H_{i,N_{0}}^{n}\right\}
^{T}$ is a vector-valued function with the domain and range belonging to $%
\mathbb{R}^{N}$ and $\mathbb{R}^{N_{0}}$, respectively. Notice, $\mathbf{H}%
_{i}^{n}$ depends also on scheme parameters (e.g., space and time
increments), however, this dependence is usually not included in the
notation. We will assume that $n$\ in (\ref{NSCS10}) denotes the time level, 
$t_{n}$ $\left( =n\Delta t\right) $. Thus, the time increment will be
represented by $\Delta t=t_{\max }/M=const$, where $t_{\max }$\ denotes some
finite time over which we wish to compute. If we introduce the additional
notation 
\begin{equation}
\mathbf{v}^{n}\mathbf{=}\left\{ \left( \mathbf{v}_{1}^{n}\right) ^{T},\left( 
\mathbf{v}_{2}^{n}\right) ^{T},\ldots ,\left( \mathbf{v}_{I}^{n}\right)
^{T}\right\} ^{T},\ \mathbf{H}^{n}\mathbf{=}\left\{ \left( \mathbf{H}%
_{1}^{n}\right) ^{T},\left( \mathbf{H}_{2}^{n}\right) ^{T},\ldots ,\left( 
\mathbf{H}_{I}^{n}\right) ^{T}\right\} ^{T},  \label{NSCS05}
\end{equation}
then the scheme (\ref{NSCS10}) can be written in the form 
\begin{equation}
\mathbf{v}^{n+1}=\mathbf{H}^{n}(\mathbf{v}^{n}),\quad \mathbf{H}^{n}:\Omega
_{n}\subseteq \mathbb{R}^{N}\rightarrow \mathbb{R}^{N},\quad n,n+1\in \omega
_{2}\equiv \left\{ 0,1,\ldots ,M\right\} .  \label{NSCS20}
\end{equation}

As usual (e.g., \cite[p. 62]{Ortega and Rheinboldt 1970}), for mappings $%
\mathbf{f}:\Omega _{f}\subseteq \mathbb{R}^{N}\rightarrow \mathbb{R}^{N}$
and $\mathbf{g}:\Omega _{g}\subseteq \mathbb{R}^{N}\rightarrow \mathbb{R}%
^{N} $, the composite mapping $\mathbf{h}=\mathbf{g}\circ \mathbf{f}$ is
defined by $\mathbf{h}\left( \mathbf{v}\right) =\mathbf{g}\left( \mathbf{f}%
\left( \mathbf{v}\right) \right) $ for all $\mathbf{v\in }\Omega
_{h}=\left\{ \mathbf{v\in }\Omega _{f}\mid \mathbf{f}\left( \mathbf{v}%
\right) \in \Omega _{g}\right\} $. Using the composite mapping approach, we
rewrite Scheme (\ref{NSCS20}) to read 
\begin{equation}
\mathbf{y}=\mathbf{F}\left( \mathbf{x}\right) ,\quad \mathbf{F}:\Omega
_{F}\subseteq \mathbb{R}^{N}\rightarrow \mathbb{R}^{N},  \label{NSCS30}
\end{equation}
where the following notation is used: $\mathbf{x=v}^{0}$, $\mathbf{y}=%
\mathbf{v}^{M}$, $\mathbf{F=H}^{M-1}\circ \mathbf{H}^{M-2}\circ \ldots \circ 
\mathbf{H}^{0}$, $\Omega _{F}=\left\{ \mathbf{v}^{0}\mathbf{\in }\Omega
_{0}\mid \right. $ $\mathbf{v}^{1}=\mathbf{H}^{0}\left( \mathbf{v}%
^{0}\right) \in \Omega _{1}\mid $ $\ldots $ $\mid \mathbf{v}^{M-1}=$ $\left. 
\mathbf{H}^{M-2}\left( \mathbf{v}^{M-2}\right) \in \Omega _{M-1}\right\} $.
Let the scheme parameters (including time increments) be represented by a
vector $\mathbf{s}$ belonging to some normed space with the norm $\left| 
\mathbf{s}\right| $.

Since differentiability of $\mathbf{H}^{n}$ as well as $\mathbf{F}$ will be
a key element in the following, let us note that the composite mapping $%
\mathbf{F}$\ will be Fr\'{e}chet-differentiable \cite[item 3.1.5]{Ortega and
Rheinboldt 1970} if all of the maps, $\mathbf{H}^{n}$, are
Fr\'{e}chet-differentiable \cite[item 3.1.7]{Ortega and Rheinboldt 1970}.\
However, if all of the maps are Fr\'{e}chet-differentiable, but one that
Gateaux-differentiable \cite[item 3.1.1]{Ortega and Rheinboldt 1970}, then
the composite mapping $\mathbf{F}$\ has a Gateaux-derivative \cite[item
3.1.7]{Ortega and Rheinboldt 1970}. Notice, if there exist at least two maps
having Gateaux-derivatives, then $\mathbf{F}$ need not be differentiable 
\cite[E 3.l-7]{Ortega and Rheinboldt 1970}.

Scheme (\ref{NSCS30}) is said to be stable (see, e.g., \cite{Ganzha and
Vorozhtsov 1996b}, \cite{Godlewski and Raviart 1996}, \cite{LeVeque 2002}, 
\cite{Richtmyer and Morton 1967}, \cite{Samarskii 2001}, \cite{Samarskiy and
Gulin 1973}) if there exist positive $s_{0}$, $C=const$ such that for all $%
\mathbf{x,}$ $\mathbf{x}_{\ast }\in \Omega _{F}$ the following inequality is
valid 
\begin{equation}
\left\| \mathbf{F}\left( \mathbf{x}_{\ast }\right) -\mathbf{F}\left( \mathbf{%
x}\right) \right\| \leq C\left\| \mathbf{x}_{\ast }-\mathbf{x}\right\|
,\quad \forall \ \mathbf{s}:\ \left| \mathbf{s}\right| \leq s_{0}.
\label{NSCS40}
\end{equation}
Thus, Scheme (\ref{NSCS30}) will be stable \emph{iff} (if and only if) the
function $\mathbf{F}$ will be Lipschitz for a constant $C$.

To be more specific, let us consider the ``slit plane'' \cite{Heinonen Juha
2005} in polar coordinates $\left( r,\theta \right) $ 
\begin{equation}
\Omega _{F}=\left\{ \left( r,\theta \right) \mid 0<r<\infty ,\ -\pi <\theta
<\pi \right\} \subset \mathbb{R}^{2},  \label{NSCS42}
\end{equation}
and the function $\mathbf{F}=\left\{ F_{1}\left( r,\theta \right)
,F_{2}\left( r,\theta \right) \right\} ^{T}$ such that \cite{Heinonen Juha
2005} 
\begin{equation}
F_{1}=r,\ F_{2}=\theta \diagup 2.  \label{NSCS44}
\end{equation}
If we take $\left( r,\theta \right) _{\ast }=\left( r_{0},-\pi +\varepsilon
\right) $, $\left( r,\theta \right) =\left( r_{0},\pi -\varepsilon \right) $%
, and $r_{0}=const$, then obviously the mapping (\ref{NSCS44}) is not
Lipschitz, since $C$ in (\ref{NSCS40}) tends to infinity\ as $\varepsilon
\rightarrow 0$. Therefore, we have to conclude, in view of the above
definition, that Scheme (\ref{NSCS44}) is not stable, even though the
function $\mathbf{F}$, (\ref{NSCS44}), is locally Lipschitz for $C=1$, and,
further, $\mathbf{F}$ is the non-stretching mapping of the ``slit plane'' (%
\ref{NSCS42}) into the right semi-plane. Hence, the preceding definition of
stability needs minor changes.

A set $\Omega \subseteq \mathbb{R}^{N}$ is said to be path-connected if
every two points $\mathbf{x}$, $\mathbf{x}_{\ast }$ $\in \Omega $ can be
joined by a continuous curve ($\gamma :\left[ 0,1\right] \subset \mathbb{R}%
\rightarrow \Omega $, \cite{Heinonen Juha 2005}, \cite[p. 113]{Kolmogorov
and Fomin 1970}) of finite length, $L\left( \gamma \right) $. The intrinsic
metric \cite{Heinonen Juha 2005} $\Lambda _{\Omega }$ in a path-connected
set $\Omega $\ is defined as 
\begin{equation}
\Lambda _{\Omega }\left( \mathbf{x},\mathbf{x}_{\ast }\right) =\underset{%
\gamma \subset \Omega }{\inf }L\left( \gamma \right) ,\quad \gamma :\ 
\mathbf{x=}\gamma \left( 0\right) ,\ \mathbf{x}_{\ast }=\gamma \left(
1\right) ,\ L\left( \gamma \right) <\infty .  \label{NSCS46}
\end{equation}
An open ball (of radius $r$) about $\mathbf{x\in }\mathbb{R}^{N}$ is denoted
by $B\left( \mathbf{x},r\right) $ (or just $B_{\mathbf{x}}$).

\begin{definition}
\label{DefStability}Let $\Omega _{F}$ in (\ref{NSCS30}) be path-connected.
Scheme (\ref{NSCS30}) is said to be stable if there exist positive $s_{0}$, $%
C=const$ such that the following inequality holds 
\begin{equation}
\left\| \mathbf{F}\left( \mathbf{x}_{\ast }\right) -\mathbf{F}\left( \mathbf{%
x}\right) \right\| \leq C\Lambda _{\Omega _{F}}\left( \mathbf{x},\mathbf{x}%
_{\ast }\right) ,\quad \forall \ \mathbf{x,x}_{\ast }\in \Omega _{F},\quad
\forall \ \mathbf{s}:\ \left| \mathbf{s}\right| \leq s_{0}.  \label{NDS20}
\end{equation}
\end{definition}

Notice, Scheme (\ref{NSCS44}) is stable, since Inequality (\ref{NDS20})
holds for $C=1$.

\begin{lemma}
\label{LipschitzStability}Let the path-connected $\Omega _{F}$ of (\ref
{NSCS30}) be open in $\mathbb{R}^{N}$. Scheme (\ref{NSCS30}) will be stable
in terms of Definition \ref{DefStability} \emph{iff}\ $\mathbf{F}$\ in (\ref
{NSCS30}) will be locally Lipschitz for a common constant $C$, for all
scheme parameters $\mathbf{s}$ such\ that $\left| \mathbf{s}\right| \leq
s_{0}$.
\end{lemma}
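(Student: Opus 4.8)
The plan is to prove the two implications separately, reducing both to the relationship between the intrinsic metric $\Lambda _{\Omega _{F}}$ and the Euclidean metric, first on small balls and then along rectifiable curves. Throughout, the word \emph{common} in the hypothesis is what allows a single constant $C$ to survive all the estimates.

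For the ``only if'' direction (stability implies local Lipschitz) I would exploit that $\Omega _{F}$ is open: about any $\mathbf{x}_{0}\in \Omega _{F}$ there is a ball $B(\mathbf{x}_{0},r)\subset \Omega _{F}$. Since a ball is convex, for any $\mathbf{x},\mathbf{x}_{\ast }\in B(\mathbf{x}_{0},r)$ the straight segment joining them lies in $\Omega _{F}$ and has length $\|\mathbf{x}_{\ast }-\mathbf{x}\|$; hence $\Lambda _{\Omega _{F}}(\mathbf{x},\mathbf{x}_{\ast })\leq \|\mathbf{x}_{\ast }-\mathbf{x}\|$. Substituting this into the stability inequality (\ref{NDS20}) yields $\|\mathbf{F}(\mathbf{x}_{\ast })-\mathbf{F}(\mathbf{x})\|\leq C\|\mathbf{x}_{\ast }-\mathbf{x}\|$ on $B(\mathbf{x}_{0},r)$, i.e.\ $\mathbf{F}$ is $C$-Lipschitz on that ball. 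Because (\ref{NDS20}) holds with the same $C$ for every $\mathbf{s}$ with $\left| \mathbf{s}\right| \leq s_{0}$, the constant is common in the required sense. This direction is short.

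The substantive direction is the converse (local Lipschitz implies stability), and I expect the covering/partition argument to be the main obstacle. Fix $\mathbf{x},\mathbf{x}_{\ast }\in \Omega _{F}$ and a curve $\gamma :[0,1]\to \Omega _{F}$ with $\gamma (0)=\mathbf{x}$, $\gamma (1)=\mathbf{x}_{\ast }$, $L(\gamma )<\infty $; it suffices to prove $\|\mathbf{F}(\mathbf{x}_{\ast })-\mathbf{F}(\mathbf{x})\|\leq C\,L(\gamma )$ and then take the infimum over $\gamma $ to obtain (\ref{NDS20}). By hypothesis every point $\mathbf{p}$ of the compact image $K=\gamma ([0,1])$ has a ball $B(\mathbf{p},r_{\mathbf{p}})\subset \Omega _{F}$ on which $\mathbf{F}$ is $C$-Lipschitz, and these balls cover $K$. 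I would invoke the Lebesgue number lemma to obtain $\delta >0$ such that any subset of $K$ of diameter less than $\delta $ lies in one such ball, and then use the uniform continuity of $\gamma $ to choose a partition $0=t_{0}<t_{1}<\cdots <t_{k}=1$ with $\operatorname{diam}\gamma ([t_{j-1},t_{j}])<\delta $. Each consecutive pair $\gamma (t_{j-1}),\gamma (t_{j})$ then lies in a common ball, so the local Lipschitz estimate applies termwise.

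Finally I would assemble the pieces by the triangle inequality: $\|\mathbf{F}(\mathbf{x}_{\ast })-\mathbf{F}(\mathbf{x})\|\leq \sum_{j=1}^{k}\|\mathbf{F}(\gamma (t_{j}))-\mathbf{F}(\gamma (t_{j-1}))\|\leq C\sum_{j=1}^{k}\|\gamma (t_{j})-\gamma (t_{j-1})\|$. The last sum is the length of an inscribed polygon of $\gamma $, which is bounded above by $L(\gamma )$ by the very definition of arc length; hence $\|\mathbf{F}(\mathbf{x}_{\ast })-\mathbf{F}(\mathbf{x})\|\leq C\,L(\gamma )$, and passing to the infimum over admissible $\gamma $ gives (\ref{NDS20}). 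Since the hypothesis supplies a common $C$ valid for all $\left| \mathbf{s}\right| \leq s_{0}$ and the argument never alters $C$, the stability estimate holds uniformly in $\mathbf{s}$. The only points demanding real care are the passage from ``locally Lipschitz'' to a single global constant along the whole curve, handled by compactness of $K$ together with the common-constant hypothesis, and the elementary but essential fact that an inscribed polygon never exceeds the arc length of $\gamma $.
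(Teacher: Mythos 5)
Your proof is correct and follows the same overall strategy as the paper's: the forward direction uses the straight segment inside a ball (where the intrinsic metric $\Lambda _{\Omega _{F}}$ reduces to the Euclidean distance), and the converse chains the local Lipschitz estimate along a near-optimal finite-length curve using compactness of its image. The difference lies in how the chaining step is executed, and your version is the tighter one. The paper extracts a finite subcover by balls centered at points of the curve and then applies the Lipschitz estimate directly to consecutive ball centers $\mathbf{z}_{k}$, $\mathbf{z}_{k+1}$; as written this has two small gaps: nothing guarantees that two consecutive centers of the subcover lie in a \emph{common} Lipschitz ball, and nothing guarantees that the centers occur in order along the curve, which is what makes $\sum_{k}\left\| \mathbf{z}_{k+1}-\mathbf{z}_{k}\right\| \leq L\left( \gamma \right) $ an inscribed-polygon estimate. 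Your combination of the Lebesgue number lemma with a sufficiently fine partition $0=t_{0}<t_{1}<\cdots <t_{k}=1$ of the parameter interval repairs both defects at once: each arc $\gamma \left( \left[ t_{j-1},t_{j}\right] \right) $ lies in a single Lipschitz ball, and the points $\gamma \left( t_{j}\right) $ are automatically ordered along $\gamma $, so the polygon inequality holds by the definition of arc length. Finally, taking the infimum over admissible curves directly is equivalent to the paper's $\varepsilon $-argument via (\ref{NSCS46}), and both arguments preserve the common constant $C$ uniformly in $\mathbf{s}$ with $\left| \mathbf{s}\right| \leq s_{0}$.
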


\begin{proof}
Suppose Scheme (\ref{NSCS30}) is stable, i.e. (\ref{NDS20}) is valid. Choose
any point $\mathbf{x\in }\Omega _{F}$. Since $\Omega _{F}$ is open, there
exists a radius $r$ such that $B\left( \mathbf{x},r\right) \subset \Omega
_{F}$. Choose any point $\mathbf{x}_{\ast }\in B\left( \mathbf{x},r\right) $%
, and let $\gamma _{\ast }$ be the straight line segment joining the points $%
\mathbf{x}$, $\mathbf{x}_{\ast }$ $\in $ $B\left( \mathbf{x},r\right) $. In
view of (\ref{NDS20}), $\mathbf{F}$\ in (\ref{NSCS30}) will be locally
Lipschitz for a common constant $C$, for all $\mathbf{s}:\ \left| \mathbf{s}%
\right| \leq s_{0}$, since $\Lambda _{\Omega _{F}}\left( \mathbf{x},\mathbf{x%
}_{\ast }\right) =L\left( \gamma _{\ast }\right) =\left\| \mathbf{x}_{\ast }-%
\mathbf{x}\right\| $.

Conversely, suppose that $\mathbf{F}$\ in (\ref{NSCS30}) is locally
Lipschitz for a common constant $C$, for all $\mathbf{s}:\ \left| \mathbf{s}%
\right| \leq s_{0}$. Let some points $\mathbf{x}$, $\mathbf{x}_{\ast }$ $\in
\Omega _{F}$ be joined by a continuous curve $\gamma $. In view of (\ref
{NSCS46}), the curve $\gamma $ can be taken such that $L\left( \gamma
\right) \leq \Lambda _{\Omega _{F}}\left( \mathbf{x},\mathbf{x}_{\ast
}\right) +\varepsilon $ for an arbitrary $\varepsilon >0$. Given any point $%
\mathbf{z}\in \gamma $, there is a ball $B_{\mathbf{z}}\subset \Omega _{F}$.
The balls $\left\{ B_{\mathbf{z}}\right\} $ form an open cover of $\gamma $.
Since the mapping $\gamma :\left[ 0,1\right] \subset \mathbb{R}\rightarrow 
\mathbb{R}^{N}$ is continuous, the curve $\gamma $ is compact \cite[p. 94]
{Kolmogorov and Fomin 1970}. Hence, by the compactness of $\gamma $, $%
\left\{ B_{\mathbf{z}}\right\} $ has a finite subcover consisting of balls $%
B_{\mathbf{x}}=B_{\mathbf{z}_{1}}$, $B_{\mathbf{z}_{2}}$, $\ldots $, $B_{%
\mathbf{z}_{K}}=B_{\mathbf{x}_{\ast }}$. Since $\mathbf{F}$\ is locally
Lipschitz, we find 
\begin{equation}
\left\| \mathbf{F}\left( \mathbf{z}_{k+1}\right) -\mathbf{F}\left( \mathbf{z}%
_{k}\right) \right\| \leq C\left\| \mathbf{z}_{k+1}-\mathbf{z}_{k}\right\|
,\quad k=1,2,\ldots K-1,\ \forall \ \mathbf{s}:\ \left| \mathbf{s}\right|
\leq s_{0}.  \label{NDS30}
\end{equation}
Then, by virtue of (\ref{NDS30}), we find 
\begin{equation*}
\left\| \mathbf{F}\left( \mathbf{x}_{\ast }\right) -\mathbf{F}\left( \mathbf{%
x}\right) \right\| =\left\| \sum_{k}\left[ \mathbf{F}\left( \mathbf{z}%
_{k+1}\right) -\mathbf{F}\left( \mathbf{z}_{k}\right) \right] \right\| \leq
C\sum_{k}\left\| \mathbf{z}_{k+1}-\mathbf{z}_{k}\right\| \leq
\end{equation*}
\begin{equation}
CL\left( \gamma \right) \leq C\Lambda _{\Omega _{F}}\left( \mathbf{x},%
\mathbf{x}_{\ast }\right) +\varepsilon C,\quad \forall \ \mathbf{s}:\ \left| 
\mathbf{s}\right| \leq s_{0}.  \label{NDS40}
\end{equation}
By letting $\varepsilon \rightarrow 0$, we find that (\ref{NDS20}) holds.
\end{proof}

\label{aaa0}Let us find the necessary and sufficient conditions for the
stability of Scheme (\ref{NSCS30}). Let $W^{1,\infty }\left( \Omega
_{F}\right) $ denote the Sobolev space, and let $\mathbf{F\equiv }\left\{
F_{1}\right. ,$ $F_{2},$ $\ldots ,$ $\left. F_{N}\right\} ^{T}$ in (\ref
{NSCS30}). Then, $F_{i}$, $i=1,2,\ldots ,N$, (and, hence, $\mathbf{F}$) is
locally Lipschitz (in the sense of having representatives) \emph{iff}\ $%
F_{i}\in W^{1,\infty }\left( \Omega _{F}\right) $ (see, e.g., \cite[Theorem
4.1]{Heinonen Juha 2005}). Let $\nabla F_{i}$ denote the distributional
gradient of $F_{i}$, and let $\delta \mathbf{F,}$ $\delta \mathbf{x}$ $%
\mathbf{\in }$ $\mathbb{R}^{N}$ denote variations. The following equality 
\begin{equation}
\delta \mathbf{F}=\mathbf{F}^{\prime }\cdot \delta \mathbf{x},\quad \mathbf{F%
}^{\prime }=\left\{ \nabla F_{1}\right. ,\nabla F_{2},\ldots ,\left. \nabla
F_{N}\right\} ^{T},  \label{NDS55}
\end{equation}
will be viewed as the scheme in variations for (\ref{NSCS30}).

\begin{lemma}
\label{LinSchStab}Linear Scheme (\ref{NDS55}) will be stable \emph{iff}\
there exist positive $s_{0}$, $C=const$ such that 
\begin{equation}
\left\| \mathbf{F}^{\prime }\right\| \leq C=const,\quad \forall \ \mathbf{x}%
\in \Omega _{F},\quad \forall \ \mathbf{s}:\ \left| \mathbf{s}\right| \leq
s_{0}.  \label{NDS65}
\end{equation}
\end{lemma}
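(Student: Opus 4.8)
We must show that the linear scheme in variations
$\delta \mathbf{F} = \mathbf{F}'\cdot \delta \mathbf{x}$ is stable (in the
sense of Definition \ref{DefStability}) \emph{iff}\ the operator norm
$\left\| \mathbf{F}'\right\|$ is bounded by a common constant $C$, uniformly
in $\mathbf{x}\in\Omega_F$ and in the scheme parameters $\mathbf{s}$ with
$\left|\mathbf{s}\right|\leq s_0$. Note that (\ref{NDS55}) is a linear map
$\delta\mathbf{x}\mapsto \mathbf{F}'\cdot\delta\mathbf{x}$ for each fixed
$\mathbf{x}$, so stability here means Lipschitz continuity of this linear map
with respect to its argument $\delta\mathbf{x}$, uniformly in the frozen data.

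**Plan of attack.** The plan is to reduce the statement to Lemma
\ref{LipschitzStability} by observing that for a linear map, local Lipschitz
continuity with constant $C$ is exactly the statement that the induced
operator norm is at most $C$. The forward ($\Rightarrow$) direction and the
converse ($\Leftarrow$) direction are both short once this identification is
made. First I would fix $\mathbf{x}\in\Omega_F$ and $\mathbf{s}$ and treat
$\mathbf{F}'=\mathbf{F}'(\mathbf{x},\mathbf{s})$ as a constant matrix acting
on the variation $\delta\mathbf{x}$; the relevant domain is all of
$\mathbb{R}^N$ (or a path-connected open subset thereof), which is trivially
path-connected, so Definition \ref{DefStability} and its reformulation via
Lemma \ref{LipschitzStability} both apply.

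**Converse direction ($\Leftarrow$).** Assume (\ref{NDS65}) holds. For any two
variations $\delta\mathbf{x}_\ast,\delta\mathbf{x}$ and any fixed
$\mathbf{x},\mathbf{s}$, linearity gives
$\mathbf{F}'\cdot\delta\mathbf{x}_\ast-\mathbf{F}'\cdot\delta\mathbf{x}
=\mathbf{F}'\cdot(\delta\mathbf{x}_\ast-\delta\mathbf{x})$, whence by the
definition of the induced operator norm
$\left\|\mathbf{F}'\cdot\delta\mathbf{x}_\ast-\mathbf{F}'\cdot\delta\mathbf{x}
\right\|\leq \left\|\mathbf{F}'\right\|\,\left\|\delta\mathbf{x}_\ast
-\delta\mathbf{x}\right\|\leq C\left\|\delta\mathbf{x}_\ast-\delta\mathbf{x}
\right\|$. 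This is precisely the (global, hence local) Lipschitz bound with
the common constant $C$, valid for all admissible $\mathbf{s}$. Invoking Lemma
\ref{LipschitzStability} (the domain being open and path-connected) yields
stability in the sense of Definition \ref{DefStability}.

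**Forward direction ($\Rightarrow$).** Assume Scheme (\ref{NDS55}) is stable.
By Lemma \ref{LipschitzStability}, the linear map is locally Lipschitz for a
common constant $C$, i.e.\ $\left\|\mathbf{F}'\cdot(\delta\mathbf{x}_\ast
-\delta\mathbf{x})\right\|\leq C\left\|\delta\mathbf{x}_\ast-\delta\mathbf{x}
\right\|$ uniformly in $\mathbf{x},\mathbf{s}$. Setting $\delta\mathbf{x}=0$
and letting $\delta\mathbf{x}_\ast$ range over the unit sphere gives
$\left\|\mathbf{F}'\cdot\delta\mathbf{x}_\ast\right\|\leq C$ for all unit
vectors, and taking the supremum recovers $\left\|\mathbf{F}'\right\|\leq C$
by definition of the induced norm. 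The only subtlety is that
$\mathbf{F}'(\mathbf{x},\mathbf{s})$ varies with $\mathbf{x}$ over $\Omega_F$:
the ``common constant'' clause in Lemma \ref{LipschitzStability} is exactly
what upgrades a pointwise Lipschitz bound into the uniform bound
(\ref{NDS65}) over all $\mathbf{x}\in\Omega_F$ and all $\left|\mathbf{s}\right|
\leq s_0$.

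**Anticipated main obstacle.** The computation itself is routine; the one point
demanding care is that the variation $\delta\mathbf{x}$ plays the role of the
\emph{argument} of the linear map, while the base point $\mathbf{x}$ and the
parameter $\mathbf{s}$ are \emph{frozen data} that index a family of linear
maps. I would make this distinction explicit so that the appeal to Lemma
\ref{LipschitzStability} is clean, and so that the equivalence between the
operator norm and the best Lipschitz constant is applied at the correct level.
A secondary technical caveat is the ``in the sense of having representatives''
qualification attached to $\mathbf{F}'$ as a distributional gradient: I would
note that for the \emph{linearized} scheme $\mathbf{F}'$ is simply the given
matrix, so no measure-theoretic issue arises here, the Sobolev/a.e.\ subtleties
having already been absorbed in the passage to (\ref{NDS55}).
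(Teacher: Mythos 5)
Your proof is correct and follows essentially the same route as the paper: sufficiency via the submultiplicative bound $\left\| \mathbf{F}^{\prime }\cdot \delta \mathbf{x}\right\| \leq \left\| \mathbf{F}^{\prime }\right\| \left\| \delta \mathbf{x}\right\| $, and necessity by recognizing the induced operator norm as the supremum of $\left\| \mathbf{F}^{\prime }\cdot \delta \mathbf{x}\right\| \diagup \left\| \delta \mathbf{x}\right\| $, uniformly over the frozen data $\mathbf{x}$ and $\mathbf{s}$. The paper works with these two inequalities directly rather than detouring through Lemma \ref{LipschitzStability}, but since the variation space is all of $\mathbb{R}^{N}$ (where the intrinsic metric is the norm metric), your detour is harmless and the arguments coincide.
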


\begin{proof}
The sufficiency is obvious. Actually, by virtue of (\ref{NDS65}), we find
that $\left\| \delta \mathbf{F}\right\| =\left\| \mathbf{F}^{\prime }\cdot
\delta \mathbf{x}\right\| \leq \left\| \mathbf{F}^{\prime }\right\| \left\|
\delta \mathbf{x}\right\| \leq C\left\| \delta \mathbf{x}\right\| $, i.e. 
\begin{equation}
\left\| \delta \mathbf{F}\right\| \leq C\left\| \delta \mathbf{x}\right\|
,\quad \forall \ \mathbf{x}\in \Omega _{F},\quad \forall \ \mathbf{s}:\
\left| \mathbf{s}\right| \leq s_{0}.  \label{NDS75}
\end{equation}
Conversely, suppose that (\ref{NDS75}) is valid. Then, in view of 
\cite[Theorem 2, p. 224]{Kolmogorov and Fomin 1970}, we write 
\begin{equation}
\left\| \mathbf{F}^{\prime }\right\| =\underset{\left\| \delta \mathbf{x}%
\right\| \neq 0}{\sup }\frac{\left\| \mathbf{F}^{\prime }\cdot \delta 
\mathbf{x}\right\| }{\left\| \delta \mathbf{x}\right\| }=\underset{\left\|
\delta \mathbf{x}\right\| \neq 0}{\sup }\frac{\left\| \delta \mathbf{F}%
\right\| }{\left\| \delta \mathbf{x}\right\| }\leq \underset{\left\| \delta 
\mathbf{x}\right\| \neq 0}{\sup }\frac{C\left\| \delta \mathbf{x}\right\| }{%
\left\| \delta \mathbf{x}\right\| }=C.  \label{NDS80}
\end{equation}
Hence, (\ref{NDS65}) holds, in view of (\ref{NDS80})
\end{proof}

\begin{theorem}
\label{iffStab}Consider Scheme (\ref{NSCS30}). Let the path-connected $%
\Omega _{F}$ be open, $\mathbf{F\equiv }\left\{ F_{1}\right. ,$ $F_{2},$ $%
\ldots ,$ $\left. F_{N}\right\} ^{T}$ be bounded, and let $\left\| \mathbf{F}%
^{\prime }\right\| \equiv \left\| \mathbf{f}_{F}\right\| $, $\mathbf{f}%
_{F}\equiv \left\{ \left\| \nabla F_{1}\right\| _{\infty }\right. $ $,$ $%
\left\| \nabla F_{2}\right\| _{\infty },$ $\ldots ,$ $\left. \left\| \nabla
F_{N}\right\| _{\infty }\right\} ^{T}$, $\nabla F_{i}$, $i=1,2,\ldots ,N$,
denote the distributional gradient of $F_{i}$. Then, Scheme (\ref{NSCS30})
will be stable \emph{iff}\ its scheme in variations, (\ref{NDS55}), will be
stable.
\end{theorem}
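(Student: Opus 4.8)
The plan is to prove the theorem by chaining the two preceding lemmas through the classical identification of locally Lipschitz functions with members of the Sobolev space $W^{1,\infty}$. Since $\Omega_F$ is path-connected and open, Lemma \ref{LipschitzStability} already reduces the stability of Scheme (\ref{NSCS30}) to the statement that $\mathbf{F}$ is locally Lipschitz for a common constant $C$, uniformly over all scheme parameters $\mathbf{s}$ with $\left|\mathbf{s}\right|\leq s_0$. On the other side, Lemma \ref{LinSchStab} reduces the stability of the scheme in variations (\ref{NDS55}) to the uniform bound $\left\|\mathbf{F}^{\prime}\right\|\leq C$, i.e. to (\ref{NDS65}). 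Thus the theorem collapses to a single equivalence: $\mathbf{F}$ is locally Lipschitz with a common constant if and only if $\left\|\mathbf{F}^{\prime}\right\|=\left\|\mathbf{f}_F\right\|$ is bounded by a constant uniformly in $\mathbf{x}$ and $\mathbf{s}$.

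First I would establish the forward implication. Suppose $\mathbf{F}$ is locally Lipschitz for a common constant $C$. By equivalence of norms on $\mathbb{R}^N$, each component function $F_i$ inherits a uniform local Lipschitz bound, so by the characterization cited before Lemma \ref{LinSchStab} (Heinonen, Theorem~4.1) each $F_i$ lies in $W^{1,\infty}(\Omega_F)$ and its distributional gradient satisfies $\left\|\nabla F_i\right\|_{\infty}\leq c_1 C$ for a fixed norm-equivalence factor $c_1$ depending only on $N$. Assembling these bounds into the vector $\mathbf{f}_F$ yields $\left\|\mathbf{F}^{\prime}\right\|=\left\|\mathbf{f}_F\right\|\leq c_2 C$, which is the bound (\ref{NDS65}). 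Here the boundedness hypothesis on $\mathbf{F}$ enters essentially: it guarantees $F_i\in L^{\infty}(\Omega_F)$, so that membership in $W^{1,\infty}(\Omega_F)$ (which requires both $F_i$ and $\nabla F_i$ to be essentially bounded) is genuinely available rather than merely the local integrability of the gradient.

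Conversely, assume $\left\|\mathbf{F}^{\prime}\right\|=\left\|\mathbf{f}_F\right\|\leq C$. Then each $\left\|\nabla F_i\right\|_{\infty}$ is controlled by a fixed multiple of $C$, so every $F_i\in W^{1,\infty}(\Omega_F)$ and, again by the Heinonen characterization, has a locally Lipschitz representative. Integrating $\nabla F_i$ along the straight segment inside any ball $B(\mathbf{x},r)\subset\Omega_F$ — precisely the device used in the proof of Lemma \ref{LipschitzStability} — gives $\left|F_i(\mathbf{x}_{\ast})-F_i(\mathbf{x})\right|\leq\left\|\nabla F_i\right\|_{\infty}\left\|\mathbf{x}_{\ast}-\mathbf{x}\right\|$, and recombining the components through the ambient vector norm produces a common local Lipschitz constant for $\mathbf{F}$ proportional to $C$. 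Feeding this back through Lemma \ref{LipschitzStability} establishes stability of Scheme (\ref{NSCS30}) in the sense of Definition \ref{DefStability}, closing the equivalence.

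The step I expect to be the main obstacle is the careful bookkeeping of the several distinct norms — the ambient vector norm $\left\|\cdot\right\|$ on $\mathbb{R}^N$, the essential-supremum norms $\left\|\nabla F_i\right\|_{\infty}$ of the distributional gradients, and the vector norm applied to $\mathbf{f}_F$ — together with the verification that the common constant governing local Lipschitz continuity of $\mathbf{F}$ is equivalent, up to fixed factors independent of $\mathbf{x}$ and $\mathbf{s}$, to the bound on $\left\|\mathbf{F}^{\prime}\right\|$. One must also confirm that uniformity in the scheme parameter $\mathbf{s}$ is preserved at every stage, since Definition \ref{DefStability} demands a single $C$ valid for all $\left|\mathbf{s}\right|\leq s_0$; this is automatic because the norm-equivalence factors depend only on the fixed dimension $N$.
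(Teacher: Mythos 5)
Your proof is correct and takes essentially the same route as the paper's: the paper's own proof is exactly the chain of equivalences stability of (\ref{NSCS30}) $\Leftrightarrow$ local Lipschitz continuity with a common constant (Lemma \ref{LipschitzStability}) $\Leftrightarrow$ $F_i \in W^{1,\infty}(\Omega_F)$ (Heinonen's characterization) $\Leftrightarrow$ (\ref{NDS65}) $\Leftrightarrow$ stability of (\ref{NDS55}) (Lemma \ref{LinSchStab}), which is precisely the structure you follow. The only difference is that you supply the quantitative bookkeeping --- the norm-equivalence constants, the role of the boundedness hypothesis in securing $F_i \in L^{\infty}$, and uniformity in $\mathbf{s}$ --- that the paper compresses into the single word ``trivial.''
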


\begin{proof}
The proof is trivial. Actually, Scheme (\ref{NSCS30}) is stable $%
\Longleftrightarrow $ $\mathbf{F}$\ is locally Lipschitz for a common
constant $C$ (Lemma \ref{LipschitzStability}) $\Longleftrightarrow $ $%
F_{i}\in W^{1,\infty }\left( \Omega _{F}\right) $ (see \cite[Theorem 4.1]
{Heinonen Juha 2005}) $\Longleftrightarrow $ (\ref{NDS65}) holds $%
\Longleftrightarrow $\ Scheme in variations, (\ref{NDS55}), is stable (Lemma 
\ref{LinSchStab}).
\end{proof}

Notice, if $\mathbf{F}$ in (\ref{NSCS30}) is Gateaux-differentiable, then $%
\nabla F_{i}$\ (see Theorem \ref{iffStab})\ denotes the classical gradient
of $F_{i}$, and, hence, it may be taken that $\mathbf{f}_{F}=\left\{ \left\|
\nabla F_{1}\right\| \right. $ $,$ $\left\| \nabla F_{2}\right\| ,$ $\ldots
, $ $\left. \left\| \nabla F_{N}\right\| \right\} ^{T}$, see also 
\cite[Theorem 3]{Borisov and Mond 2008}.

\section{Monotone $C^{1}$ piecewise cubics in construction of central schemes%
\label{COS}}

In this section we consider some theoretical aspects for high-order
interpolation and employment of monotone $C^{1}$ piecewise cubics (e.g., 
\cite{Fritsch and Carlson 1980}, \cite{Kocic and Milovanovic 1997}) in
construction of monotone central schemes. We will consider explicit schemes
on a uniform grid with time step $\Delta t$ and spatial mesh size $\Delta x$%
, as applied to the following hyperbolic 1-D equation 
\begin{equation}
\frac{\partial \mathbf{u}}{\partial t}+\frac{\partial }{\partial x}\mathbf{f}%
\left( \mathbf{u}\right) =0,\ t_{n}<t\leq t_{n+1}\equiv t_{n}+\Delta t,\quad 
\mathbf{u}\left( x,t_{n}\right) =\mathbf{u}^{n}\left( x\right) ,  \label{C10}
\end{equation}
Using the central differencing, we write 
\begin{equation}
\left. \frac{\partial \mathbf{u}}{\partial t}\right| _{t=t_{n+0.25},\
x=x_{i+0.5}}=\frac{\mathbf{u}_{i+0.5}^{n+0.5}-\mathbf{u}_{i+0.5}^{n}}{%
0.5\Delta t}+O\left( \left( \Delta t\right) ^{2}\right) ,  \label{C24}
\end{equation}
\begin{equation}
\left. \frac{\partial \mathbf{f}}{\partial x}\right| _{t=t_{n+0.25},\
x=x_{i+0.5}}=\frac{\mathbf{f}_{i+1}^{n+0.25}-\mathbf{f}_{i}^{n+0.25}}{\Delta
x}+O\left( \left( \Delta x\right) ^{2}\right) .  \label{C25}
\end{equation}
By virtue of (\ref{C24})-(\ref{C25}) we approximate (\ref{C10}) on the cell $%
\left[ x_{i},x_{i+1}\right] \times \left[ t_{n},t_{n+0.5}\right] $ by the
following difference equation 
\begin{equation}
\mathbf{v}_{i+0.5}^{n+0.5}=\mathbf{v}_{i+0.5}^{n}-\frac{\Delta t}{2\Delta x}%
\left( \mathbf{g}_{i+1}^{n+0.25}-\mathbf{g}_{i}^{n+0.25}\right) .
\label{C30}
\end{equation}
As usual, the mathematical treatment for the second step (i.e., on the cell $%
\left[ x_{i-0.5},x_{i+0.5}\right] \times \left[ t_{n+0.5},t_{n+1}\right] $)
of a staggered scheme will, in general, not be included in the text, because
it is quite similar to the one for the first step.

Considering that (\ref{C30}) approximates (\ref{C10}) with the accuracy $%
O(\left( \Delta x\right) ^{2}+\left( \Delta t\right) ^{2})$, the next
problem is to approximate $\mathbf{v}_{i+0.5}^{n}$ and $\mathbf{g}%
_{i}^{n+0.25}$ in such a way as to retain the accuracy of the approximation.
For instance, the following approximations 
\begin{equation}
\mathbf{v}_{i+0.5}^{n}=0.5\left( \mathbf{v}_{i}^{n}+\mathbf{v}%
_{i+1}^{n}\right) +O\left( \left( \Delta x\right) ^{2}\right) ,\quad \mathbf{%
g}_{i}^{n+0.25}=\mathbf{f}\left( \mathbf{v}_{i}^{n}\right) +O\left( \Delta
t\right) ,  \label{C50}
\end{equation}
leads to the staggered form of the famed LxF scheme that is of the
first-order approximation (see, e.g., \cite[p. 170]{Godlewski and Raviart
1996}). One way to obtain a higher-order scheme is to use a higher order
interpolation. At the same time it is required of the interpolant to be
monotonicity preserving. Notice, the classic cubic spline does not possess
such a property (see Figure \ref{Fritsch}a). Let us consider the problem of
high-order interpolation of $\mathbf{v}_{i+0.5}^{n}$ in (\ref{C30}) with
closer inspection

Let $\mathbf{p}=\mathbf{p}\left( x\right) \equiv \left\{ p^{1}\left(
x\right) ,\ldots ,p^{k}\left( x\right) ,\ldots ,p^{m}\left( x\right)
\right\} ^{T}$ be a component-wise monotone $C^{1}$ piecewise cubic
interpolant (e.g., \cite{Fritsch and Carlson 1980}, \cite{Kocic and
Milovanovic 1997}), and let 
\begin{equation*}
\mathbf{p}_{i}=\mathbf{p}\left( x_{i}\right) ,\quad \mathbf{p}_{i}^{\prime }=%
\mathbf{p}^{\prime }\left( x_{i}\right) ,\quad \Delta \mathbf{p}_{i}=\mathbf{%
p}_{i+1}-\mathbf{p}_{i},
\end{equation*}
\begin{equation}
\mathbf{p}_{i}^{\prime }=\mathbb{A}_{i}\cdot \frac{\Delta \mathbf{p}_{i}}{%
\Delta x},\quad \mathbf{p}_{i+1}^{\prime }=\mathbb{B}_{i}\cdot \frac{\Delta 
\mathbf{p}_{i}}{\Delta x},  \label{C80}
\end{equation}
where $\mathbf{p}_{i}^{\prime }$ denotes the derivative of the interpolant
at $x=x_{i}$. The diagonal matrices $\mathbb{A}_{i}$ and $\mathbb{B}_{i}$\
in (\ref{C80})\ are defined as follows 
\begin{equation}
\mathbb{A}_{i}=diag\left\{ \alpha _{i}^{1},\alpha _{i}^{2},\ldots ,\alpha
_{i}^{m}\right\} ,\ \mathbb{B}_{i}=diag\left\{ \beta _{i}^{1},\beta
_{i}^{2},\ldots ,\beta _{i}^{m}\right\} .  \label{C85}
\end{equation}
The cubic interpolant, $\mathbf{p}=\mathbf{p}\left( x\right) $, is
component-wise monotone on $\left[ x_{i},x_{i+1}\right] $ \emph{iff} one of
the following conditions (e.g., \cite{Fritsch and Carlson 1980}, \cite{Kocic
and Milovanovic 1997}) is satisfied: 
\begin{equation}
\left( \alpha _{i}^{k}-1\right) ^{2}+\left( \alpha _{i}^{k}-1\right) \left(
\beta _{i}^{k}-1\right) +\left( \beta _{i}^{k}-1\right) ^{2}-3\left( \alpha
_{i}^{k}+\beta _{i}^{k}-2\right) \leq 0,  \label{C90}
\end{equation}
\begin{equation}
\alpha _{i}^{k}+\beta _{i}^{k}\leq 3,\quad \alpha _{i}^{k}\geq 0,\ \beta
_{i}^{k}\geq 0,\quad \forall i,k.  \label{C100}
\end{equation}
As reported in \cite{Kocic and Milovanovic 1997}, the necessary and
sufficient conditions for monotonicity of a $C^{1}$ piecewise cubic
interpolant originally given by Ferguson and Miller (1969), and
independently, by Fritsch and Carlson \cite{Fritsch and Carlson 1980}. The
region of monotonicity is shown in Figure \ref{Fritsch}b. The results of
implementing a monotone $C^{1}$ piecewise cubic interpolation when compared
with the classic cubic spline interpolation, are depicted in Figure \ref
{Fritsch}a. We note (Figure \ref{Fritsch}a) that the constructed function
produces monotone interpolation and this function coincides with the classic
cubic spline at some sections where the classic cubic spline is monotone.

\begin{figure}[h]
\centerline{\includegraphics[width=11.50cm,height=4.50cm]{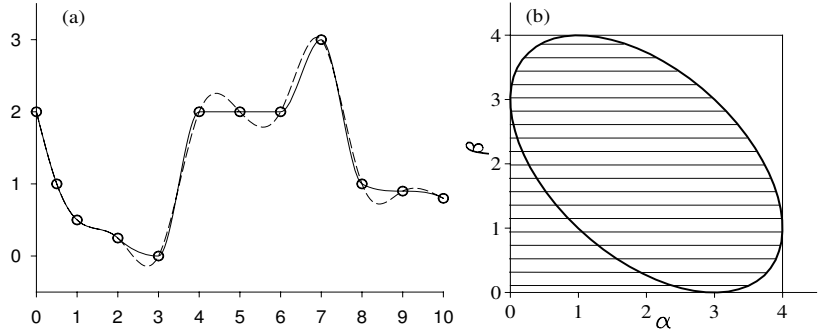}}
\caption{Monotone piecewise cubic interpolation. (a) Interpolation of a 1-D
tabulated function. Circles: prescribed tabulated values; Dashed line:
classic cubic spline; Solid line: monotone piecewise cubic. (b) Necessary
and sufficient conditions for monotonicity. Horizontal hatching: region of
monotonicity; Unshaded: cubic is non-monotone. }
\label{Fritsch}
\end{figure}

Using the cubic segment of the $C^{1}$ piecewise cubic interpolant, $\mathbf{%
p}=\mathbf{p}\left( x\right) $, (see, e.g., \cite{Fritsch and Carlson 1980}, 
\cite{Kocic and Milovanovic 1997}) for $x\in \left[ x_{i},x_{i+1}\right] $,
we obtain the following interpolation formula 
\begin{equation}
\mathbf{p}_{i+0.5}=0.5\left( \mathbf{p}_{i}+\mathbf{p}_{i+1}\right) -\frac{%
\Delta x}{8}\left( \mathbf{p}_{i+1}^{\prime }-\mathbf{p}_{i}^{\prime
}\right) +O\left( \left( \Delta x\right) ^{r}\right) .  \label{C110}
\end{equation}
If $\mathbf{p}\left( x\right) $ has a continuous fourth derivative, then $%
r=4 $ in (\ref{C110}), see e.g. \cite[p. 111]{Kahaner et al. 1989}. However,
the exact value of $\mathbf{p}_{i}^{\prime }$ in (\ref{C110}) is, in
general, unknown, and hence to construct numerical schemes, employing
formulae similar to (\ref{C110}), the value of derivatives $\mathbf{p}%
_{i}^{\prime }$ must be estimated.

Using (\ref{C110}) and the second formula in (\ref{C50}) we obtain from (\ref
{C30}) the following scheme 
\begin{equation}
\mathbf{v}_{i+0.5}^{n+0.5}=0.5\left( \mathbf{v}_{i}^{n}+\mathbf{v}%
_{i+1}^{n}\right) -\frac{\Delta x}{8}\left( \mathbf{d}_{i+1}^{n}-\mathbf{d}%
_{i}^{n}\right) -\frac{\Delta t}{2}\frac{\mathbf{f}\left( \mathbf{v}%
_{i+1}^{n}\right) -\mathbf{f}\left( \mathbf{v}_{i}^{n}\right) }{\Delta x},
\label{C120}
\end{equation}
where $\mathbf{d}_{i}^{n}$ denotes the derivative of the interpolant at $%
x=x_{i}$. In view of (\ref{C110}) and the second formula in (\ref{C50}), the
local truncation error \cite[p. 142]{LeVeque 2002}, $\psi $, on a
sufficiently smooth solution $\mathbf{u}(x,t)$ to (\ref{C10}) is found to be 
\begin{equation}
\psi =O\left( \Delta t\right) +O\left( \frac{\left( \Delta x\right) ^{r}}{%
\Delta t}\right) +O\left( \left( \Delta t\right) ^{2}+\left( \Delta x\right)
^{2}\right) .  \label{C130}
\end{equation}
In view of (\ref{C130}) we conclude that the scheme (\ref{C120}) generates a
conditional approximation, because it approximates (\ref{C10}) only if $%
\left( \Delta x\right) ^{r}\diagup \Delta t\rightarrow 0$ as $\Delta
x\rightarrow 0$ and $\Delta t\rightarrow 0$. Let $\mathbf{d}_{i}^{n}$\ be
approximated with the accuracy $O\left( \left( \Delta x\right) ^{s}\right) $%
, then the value of $r$ in (\ref{C130}) can be calculated (see Section \ref
{Appendix1}, Proposition \ref{Approximate derivative}) by the following
formula 
\begin{equation}
r=\min \left( 4,s+1\right) .  \label{C140}
\end{equation}
Interestingly, since (\ref{C120}) provides the conditional approximation,
the order of accuracy depends on the pathway taken by $\Delta x$ and $\Delta
t$ as $\Delta x\rightarrow 0$ and $\Delta t\rightarrow 0$. Actually, there
exists a pathway such that $\Delta t$ is proportional to $\left( \Delta
x\right) ^{\mu }$ and the CFL condition is fulfilled provided $\mu \geq 1$
and $\Delta x\leq \Delta x_{0}$, where $\Delta x_{0}$ is a positive value.
If we take $\mu =1$ and $s\geq 1$, then we obtain from (\ref{C130}) that the
scheme (\ref{C120}) is of the first-order. If $\mu =2$ and $s\geq 3$, then (%
\ref{C120}) is of the second-order. However, if $\mu =2$ and $s=2$, then, in
view of (\ref{C130}) and (\ref{C140}), the scheme (\ref{C120}) is of the
first-order. Moreover, under $\mu =2$ and $s=2$, the scheme will be of the
first-order even if $\mathbf{g}_{i}^{n+0.25}$ in (\ref{C30}) will be
approximated with the accuracy$\ O(\left( \Delta t\right) ^{2})$. It seems
likely that Example 6 in \cite{Kurganov and Tadmor 2000} can be seen as an
illustration of the last assertion. The Nessyahu-Tadmor (NT) scheme with the
second-order approximation of $\mathbf{d}_{i}^{n}$ is used \cite{Kurganov
and Tadmor 2000} to solve a Burgers-type equation. Since $\Delta t$ $=$ $%
O(\left( \Delta x\right) ^{2})$ \cite{Kurganov and Tadmor 2000}, the NT
scheme is of the first-order, and hence it can be the main reason for the
scheme to exhibit the smeared discontinuity computed in \cite[Fig. 6.22]
{Kurganov and Tadmor 2000}.

The approximation of derivatives $\mathbf{p}_{i}^{\prime }$ can be done by
the following three steps \cite{Fritsch and Carlson 1980}: (i) an
initialization of the derivatives $\mathbf{p}_{i}^{\prime }$; (ii) the
choice of subregion of monotonicity; (iii) modification of the initialized
derivatives $\mathbf{p}_{i}^{\prime }$ to produce a monotone interpolant.

The matter of initialization of the derivatives is the most subtle issue of
this algorithm. Actually, the approximation of $\mathbf{p}_{i}^{\prime }$
must, in general, be done with accuracy $O(\left( \Delta x\right) ^{3})$ to
obtain the second-order scheme when $\Delta t$ is proportional to $\left(
\Delta x\right) ^{2}$, inasmuch as central schemes generate a conditional
approximation. Thus, using the two-point or the three-point (centered)
difference formula (e.g. \cite{Kocic and Milovanovic 1997}, \cite{Pareschi
Lorenzo 2001}) we obtain, in general, the first-order scheme. The so called
limiter functions \cite{Kocic and Milovanovic 1997} lead, in general, to a
low-order scheme as these limiters are often $O(\Delta x)$ or $O(\left(
\Delta x\right) ^{2})$ accurate. Performing the initialization of the
derivatives $\mathbf{p}_{i}^{\prime }$ in the interpolation formula (\ref
{C110}) by the classic cubic spline interpolation \cite{Press William 1988},
we obtain the approximation, which is $O(\left( \Delta x\right) ^{3})$
accurate (e.g., \cite{Kahaner et al. 1989}, \cite{Kocic and Milovanovic 1997}%
), and hence, in general, the second-order scheme. The same accuracy, $%
O(\left( \Delta x\right) ^{3})$, can be achieved by using the four-point
approximation \cite{Kocic and Milovanovic 1997}. However, the efficiency of
the algorithm based on the classic cubic spline interpolation is comparable
with the one based on the four-point approximation, as the number of
multiplications and divisions (as well as additions and subtractions) per
one node is approximately the same for both algorithms. We will use the
classic cubic spline interpolation for the initialization of the derivatives 
$\mathbf{P}_{i}^{\prime }$ in the interpolation formula (\ref{C110}), as it
is based on the tridiagonal algorithm, which is `the rare case of an
algorithm that, in practice, is more robust than theory says it should be' 
\cite{Press William 1988}.

Obviously, for each interval $\left[ x_{i},x_{i+1}\right] $ in which the
initialized derivatives $\mathbf{p}_{i}^{\prime }$, $\mathbf{p}%
_{i+1}^{\prime }$ such that at least one point ($\alpha _{i}^{k}$, $\beta
_{i}^{k}$) does not belong to the region of monotonicity (\ref{C90})-(\ref
{C100}), the derivatives $\mathbf{p}_{i}^{\prime }$, $\mathbf{p}%
_{i+1}^{\prime }$ must be modified to $\widetilde{\mathbf{p}}_{i}^{\prime }$%
, $\widetilde{\mathbf{p}}_{i+1}^{\prime }$ such that the point ($\widetilde{%
\alpha }_{i}^{k}$, $\widetilde{\beta }_{i}^{k}$) will be in the region of
monotonicity. 
The modification of the initialized derivatives, would be much simplified if
we take a square as a subregion of monotonicity. In connection with this, we
will make use the subregions of monotonicity represented in the following
form: 
\begin{equation}
0\leq \alpha _{i}^{k}\leq 4\aleph ,\quad 0\leq \beta _{i}^{k}\leq 4\aleph
,\quad \forall i,k,  \label{CA180}
\end{equation}
where $\aleph $ is a monotonicity parameter. 
Obviously, the condition (\ref{CA180}) is sufficient for the monotonicity
(see Figure \ref{Fritsch}b) provided that $0\leq $ $\aleph $ $\leq 0.75$.

Let us now find necessary and sufficient conditions for (\ref{C110}) to be
monotonicity preserving. By virtue of (\ref{C80}), the interpolation formula
(\ref{C110}) can be rewritten to read 
\begin{equation}
\mathbf{p}_{i+0.5}=\left( 0.5\mathbf{I}+\frac{\mathbb{B}_{i}-\mathbb{A}_{i}}{%
8}\right) \cdot \mathbf{p}_{i}+\left( 0.5\mathbf{I}-\frac{\mathbb{B}_{i}-%
\mathbb{A}_{i}}{8}\right) \cdot \mathbf{p}_{i+1}.  \label{CA185}
\end{equation}
The coefficients of (\ref{CA185}) will be non-negative \emph{iff} $\left|
\beta _{i}-\alpha _{i}\right| \leq 4$. Hence (\ref{C110}) will be
monotonicity preserving \emph{iff} (\ref{CA180}) will be valid provided $%
0\leq $ $\aleph $ $\leq 1$. Notice, there is no any contradiction between
the sufficient conditions, (\ref{CA180}) provided $0\leq $ $\aleph $ $\leq
0.75$, for the interpolant, $\mathbf{p}=\mathbf{p}\left( x\right) $,\ to be
monotone through the interval $\left[ x_{i},x_{i+1}\right] $, and the
necessary and sufficient conditions, (\ref{CA180}) provided $0\leq $ $\aleph 
$ $\leq 1$, for the scheme (\ref{CA185}) to be monotonicity preserving. In
the latter case the interpolant, $\mathbf{p}=\mathbf{p}\left( x\right) $,\
may, in general, be non-monotone, however at the point $i+0.5$ the value of
an arbitrary component of $\mathbf{p}_{i+0.5}$ will be between the
corresponding components of $\mathbf{p}_{i}$ and $\mathbf{p}_{i+1}$.

To fulfill the conditions of monotonicity (\ref{CA180}), the modification of
derivatives $\mathbf{p}_{i}^{\prime }=\left\{ p_{i}^{\prime 1},p_{i}^{\prime
2},\ldots ,p_{i}^{\prime m}\right\} $ can be done by the following algorithm
suggested, in fact, by Fritsch and Carlson \cite{Fritsch and Carlson 1980}
(see also \cite{Kocic and Milovanovic 1997}): 
\begin{equation}
S_{i}^{k}:=4\aleph \min \func{mod}(\Delta _{i-1}^{k},\Delta _{i}^{k}),\quad 
\widetilde{p}_{i}^{\prime k}:=\min \func{mod}(p_{i}^{\prime
k},S_{i}^{k}),\quad \aleph =const,  \label{REMa10}
\end{equation}
where $\Delta _{i}^{k}=\left( p_{i+1}^{k}-p_{i}^{k}\right) \diagup \Delta x$%
, the function $\min \func{mod}(x,y)$ is defined (e.g., \cite{Kocic and
Milovanovic 1997}, \cite{Kurganov and Tadmor 2000}, \cite{Morton 2001}, \cite
{Pareschi Lorenzo 2001}, \cite{Serna and Marquina 2005}) as follows 
\begin{equation}
\min \func{mod}(x,y)\equiv \frac{1}{2}\left[ sgn(x)+sgn(y)\right] \min
\left( \left| x\right| ,\left| y\right| \right) .  \label{REMa20}
\end{equation}

Let us note that instead of point values, $\mathbf{v}_{i+0.5}^{n}$, employed
in the construction of the scheme (\ref{C30}), it can be used the cell
averages (e.g., \cite{Balaguer and Conde 2005}, \cite{Kurganov and Tadmor
2000}, \cite{LeVeque 2002}) calculated on the basis of the monotone $C^{1}$
piecewise cubics. In such a case we obtain, instead of (\ref{C110}), the
following interpolation formula 
\begin{equation}
\mathbf{p}_{i+0.5}=0.5\left( \mathbf{p}_{i}+\mathbf{p}_{i+1}\right)
-\varkappa \frac{\Delta x}{8}\left( \mathbf{p}_{i+1}^{\prime }-\mathbf{p}%
_{i}^{\prime }\right) ,  \label{CA200}
\end{equation}
where $\varkappa =2\diagup 3$. The region of monotonicity in this case will
also be 
\begin{equation}
0\leq \mathbb{A}_{i}\leq 4\aleph \mathbf{I},\ 0\leq \mathbb{B}_{i}\leq
4\aleph \mathbf{I},\quad 0\leq \aleph \leq 1,\quad \forall i.  \label{CA203}
\end{equation}
Notice, the interpolation formula (\ref{CA200}) coincides with (\ref{C110})
under $\varkappa =1$. Thus, in view of the interpolation formula (\ref{CA200}%
), the staggered scheme (\ref{C30}) is written to read 
\begin{equation}
\mathbf{v}_{i+0.5}^{n+0.5}=0.5\left( \mathbf{v}_{i+1}^{n}+\mathbf{v}%
_{i}^{n}\right) -\varkappa \frac{\Delta x}{8}\left( \mathbf{d}_{i+1}^{n}-%
\mathbf{d}_{i}^{n}\right) -\frac{\Delta t}{2}\frac{\mathbf{f}\left( \mathbf{v%
}_{i+1}^{n}\right) -\mathbf{f}\left( \mathbf{v}_{i}^{n}\right) }{\Delta x},
\label{CA210}
\end{equation}
where $\mathbf{d}_{i}^{n}$ denotes the derivative of the interpolant at $%
x=x_{i}$, the range of values for the parameter $\varkappa $ is the segment $%
0\leq \varkappa \leq 1$. If $\varkappa =1$ (or $\varkappa =0$), then Scheme (%
\ref{CA210}) coincides with the scheme (\ref{C120}) (or with the LxF scheme,
respectively). As it was shown above, the scheme (\ref{CA210}) is of the
first order provided $\Delta t$ $=$ $O\left( \Delta x\right) $. The central
scheme (\ref{CA210}), approximating the 1-D equation (\ref{C10}) with the
first order, will be abbreviated to as COS1.

\section{Construction of central schemes\label{COSN}}

We will consider explicit schemes on a uniform grid with time step $\Delta t$
and spatial mesh size $\Delta x$. In view of the CFL condition \cite{LeVeque
2002}, we assume for the explicit schemes, that $\Delta t=O\left( \Delta
x\right) $. Moreover, we will also assume that $\Delta x=O\left( \Delta
t\right) $, since a central scheme generates a conditional approximation to
Eq. (\ref{C10}) (see Section \ref{COS}). In such a case, the following
inequalities will be valid, for sufficiently small $\Delta t$ and $\Delta x$%
, 
\begin{equation}
\nu _{0}\Delta t\leq \Delta x\leq \mu _{0}\Delta t,\quad \nu _{0},\mu
_{0}=const,\ 0<\nu _{0}\leq \mu _{0}.  \label{CA05}
\end{equation}
Notice, for hyperbolic problems it is often assumed that $\Delta t$ and $%
\Delta x$ are related in a fixed manner (e.g., \cite[p. 140]{LeVeque 2002}, 
\cite[p. 120]{Richtmyer and Morton 1967}), i.e. it is assumed that $\Delta t$
and $\Delta x$ fulfill a more strong condition than (\ref{CA05}).

Scheme (\ref{C120}) is of the first-order, $O(\Delta t+\left( \Delta
x\right) ^{2})$, and non-oscillatory LxF scheme is of the first-order, $%
O(\Delta t+\Delta x)$. Let us demonstrate that (\ref{C120}) is, in fact, LxF
scheme with a negative numerical viscosity added to obtain a higher order
approximation to Eq. (\ref{C10}) with respect to $x$. We rewrite Scheme (\ref
{CA210}) to read 
\begin{equation*}
\frac{\mathbf{v}_{i+0.5}^{n+0.5}-\mathbf{v}_{i+0.5}^{n}}{0.5\Delta t}+\frac{%
\mathbf{f}\left( \mathbf{v}_{i+1}^{n}\right) -\mathbf{f}\left( \mathbf{v}%
_{i}^{n}\right) }{\Delta x}=
\end{equation*}
\begin{equation}
\frac{\Delta x^{2}}{\Delta t}\frac{\mathbf{v}_{i}^{n}-2\mathbf{v}%
_{i+0.5}^{n}+\mathbf{v}_{i+1}^{n}}{\Delta x^{2}}-\varkappa \frac{\Delta x^{2}%
}{4\Delta t}\frac{\mathbf{d}_{i+1}^{n}-\mathbf{d}_{i}^{n}}{\Delta x}.
\label{INA30}
\end{equation}
Notice, the second term in the right-hand side of (\ref{INA30}) is, in fact,
the negative numerical viscosity. Without this term ($\varkappa =0$), Scheme
(\ref{INA30}) would be LxF scheme. As it is demonstrated in Section \ref
{Exemplification and discussion}, Scheme (\ref{C120}) can exhibit spurious
oscillations in contrast to LxF scheme. Interestingly, there is a
possibility to improve Scheme (\ref{CA210}) by introducing an additional
positive numerical viscosity such that the scheme's order of accuracy would
increase up to $O((\Delta t)^{2}+\left( \Delta x\right) ^{2})$. Let us
approximate $\mathbf{v}_{i+0.5}^{n}$\ and $\mathbf{g}_{i}^{n+0.125}$ in (\ref
{C30}) with the accuracy $O(\left( \Delta x\right) ^{2}+\left( \Delta
t\right) ^{2})$. Using Taylor series expansion, we write 
\begin{equation}
\mathbf{g}_{i}^{n+0.25}=\mathbf{f}\left( \mathbf{v}_{i}^{n}\right) +\left. 
\frac{\partial \mathbf{f}\left( \mathbf{v}_{i}^{n}\right) }{\partial t}%
\right| _{t=t_{n}}\frac{\Delta t}{4}+O\left( \Delta t^{2}\right) .
\label{SA10}
\end{equation}
By virtue of the PDE system, (\ref{C10}), we find 
\begin{equation}
\frac{\partial \mathbf{f}}{\partial t}=\frac{\partial \mathbf{f}}{\partial 
\mathbf{u}}\cdot \frac{\partial \mathbf{u}}{\partial t}=-\frac{\partial 
\mathbf{f}}{\partial \mathbf{u}}\cdot \frac{\partial \mathbf{f}}{\partial 
\mathbf{u}}\cdot \frac{\partial \mathbf{u}}{\partial x}=-\left( \frac{%
\partial \mathbf{f}}{\partial \mathbf{u}}\right) ^{2}\cdot \frac{\partial 
\mathbf{u}}{\partial x}.  \label{SA20}
\end{equation}
Using the interpolation formula (\ref{CA200}) and the formulae (\ref{SA10})-(%
\ref{SA20}), we obtain from (\ref{C30}) the following second order central
scheme 
\begin{equation*}
\mathbf{v}_{i+0.5}^{n+0.5}=0.5\left( \mathbf{v}_{i+1}^{n}+\mathbf{v}%
_{i}^{n}\right) -\varkappa \frac{\Delta x}{8}\left( \mathbf{d}_{i+1}^{n}-%
\mathbf{d}_{i}^{n}\right) -\frac{\Delta t}{2}\frac{\mathbf{f}\left( \mathbf{v%
}_{i+1}^{n}\right) -\mathbf{f}\left( \mathbf{v}_{i}^{n}\right) }{\Delta x}+
\end{equation*}
\begin{equation}
\xi \frac{\left( \Delta t\right) ^{2}}{8\Delta x}\left[ \left( \mathbf{A}%
_{i+1}^{n}\right) ^{2}\cdot \mathbf{d}_{i+1}^{n}-\left( \mathbf{A}%
_{i}^{n}\right) ^{2}\cdot \mathbf{d}_{i}^{n}\right] ,\quad \mathbf{A\equiv }%
\frac{\partial \mathbf{f}}{\partial \mathbf{u}},  \label{SA30}
\end{equation}
where $\xi $ is introduced by analogy with $\varkappa $ in (\ref{CA210}),
and hence $0\leq \xi \leq 1$. Scheme (\ref{SA30}) coincides with (\ref{CA210}%
) provided that $\xi =0$. Since $\mathbf{d}_{i}^{n}$ is the derivative of
the interpolant at $x=x_{i}$, the last term in right-hand side of (\ref{SA30}%
) can be seen as the non-negative numerical viscosity introduced into the
first order scheme (\ref{CA210}). Owing to this term, Scheme (\ref{SA30}) is 
$O(\left( \Delta x\right) ^{2}+\left( \Delta t\right) ^{2})$ accurate,
provided that $\xi =1$. Thus, we are dealing with the vanishing viscosity
method \cite{Godlewski and Raviart 1996}, \cite{LeVeque 2002} and, hence, in
view of \cite[Theorem 3.3]{Godlewski and Raviart 1996}, the scheme, (\ref
{SA30}), satisfies the entropy condition. The central scheme (\ref{SA30}),
approximating the 1-D equation (\ref{C10}) with the second order, will be
abbreviated to as COS2.

\subsection{Stability of the second-order scheme COS2}

In view of Theorem \ref{iffStab}, the stability of (\ref{SA30}) will be
investigated on the basis of its variational scheme.\ It is assumed that the
bounded operator $\mathbf{A}$ $(=\partial \mathbf{f}\left( \mathbf{u}\right)
\diagup \partial \mathbf{u)}$ in (\ref{C10}) is Fr\'{e}chet-differentiable
on the set $\Omega _{\mathbf{u}}$ $\subset $ $\mathbb{R}^{M}$, and its
derivative is bounded on $\Omega _{\mathbf{u}}$. Hence, the following
inequalities are valid 
\begin{equation}
\underset{\mathbf{u\in }\Omega _{\mathbf{u}}}{\sup }\left\| \mathbf{A}%
\right\| \leq \lambda _{\max }<\infty ,\quad \left\| \delta \mathbf{A}%
_{i}^{n}\right\| =\left\| \frac{\partial \mathbf{A}_{i}^{n}}{\partial 
\mathbf{v}_{i}^{n}}\cdot \delta \mathbf{v}_{i}^{n}\right\| \leq \alpha
_{A}\left\| \delta \mathbf{v}_{i}^{n}\right\| ,  \label{SB05}
\end{equation}
where $\lambda _{\max }$, $\alpha _{A}=const$. Considering that $\mathbf{v}%
_{i}^{n}$ in (\ref{SA30}) is Lipschitz-continuous, we write 
\begin{equation}
\left\| \mathbf{v}_{i}^{n}-\mathbf{v}_{i+1}^{n}\right\| \leq C_{v}\Delta
x,\quad C_{v}=const.  \label{SB40}
\end{equation}
By virtue of (\ref{C80}), the second term in right-hand side of (\ref{SA30})
can be written in the form 
\begin{equation}
\varkappa \frac{\Delta x}{8}\left( \mathbf{d}_{i+1}^{n}-\mathbf{d}%
_{i}^{n}\right) =\frac{\varkappa }{8}\left( \mathbb{B}_{i}^{n}-\mathbb{A}%
_{i}^{n}\right) \cdot \left( \mathbf{v}_{i+1}^{n}-\mathbf{v}_{i}^{n}\right) .
\label{SB10}
\end{equation}
Then, the variational scheme corresponding to (\ref{SA30}) is the following

\begin{equation*}
\delta \mathbf{v}_{i+0.5}^{n+0.5}=0.5\left( \delta \mathbf{v}_{i}^{n}+\delta 
\mathbf{v}_{i+1}^{n}\right) +\frac{\varkappa }{8}\left[ \left( \mathbf{v}%
_{i}^{n}-\mathbf{v}_{i+1}^{n}\right) ^{T}\cdot \delta \mathbb{D}_{i}^{n}%
\right] ^{T}+\frac{\varkappa }{8}\mathbb{D}_{i}^{n}\cdot \left( \delta 
\mathbf{v}_{i}^{n}-\delta \mathbf{v}_{i+1}^{n}\right)
\end{equation*}
\begin{equation*}
+\xi \frac{\left( \Delta t\right) ^{2}}{8\Delta x^{2}}\left\{ \left[ \delta
\left( \left( \mathbf{A}_{i+1}^{n}\right) ^{2}\cdot \mathbb{B}_{i}\right) %
\right] \cdot \left( \mathbf{v}_{i+1}^{n}-\mathbf{v}_{i}^{n}\right) -\left[
\delta \left( \left( \mathbf{A}_{i}^{n}\right) ^{2}\cdot \mathbb{A}%
_{i}\right) \right] \cdot \left( \mathbf{v}_{i+1}^{n}-\mathbf{v}%
_{i}^{n}\right) \right\} +
\end{equation*}
\begin{equation*}
\xi \frac{\left( \Delta t\right) ^{2}}{8\Delta x^{2}}\left[ \left( \mathbf{A}%
_{i+1}^{n}\right) ^{2}\cdot \mathbb{B}_{i}-\left( \mathbf{A}_{i}^{n}\right)
^{2}\cdot \mathbb{A}_{i}\right] \cdot \left( \delta \mathbf{v}%
_{i+1}^{n}-\delta \mathbf{v}_{i}^{n}\right) +
\end{equation*}
\begin{equation}
\frac{\Delta t}{2\Delta x}\left( \mathbf{A}_{i}^{n}\cdot \delta \mathbf{v}%
_{i}^{n}-\mathbf{A}_{i+1}^{n}\cdot \delta \mathbf{v}_{i+1}^{n}\right) ,
\label{SB20}
\end{equation}
where $\mathbb{D}_{i}^{n}=diag\left\{ D_{i,1}^{n},D_{i,2}^{n},\ldots
,D_{i,M}^{n}\right\} \equiv \mathbb{B}_{i}^{n}-\mathbb{A}_{i}^{n}$. By
virtue of (\ref{CA203}), we find that $-4\aleph \mathbf{I}\leq \mathbb{D}%
_{i}^{n}\leq 4\aleph \mathbf{I}$, and hence $-8\aleph \mathbf{I}\leq \delta 
\mathbb{D}_{i}^{n}\leq 8\aleph \mathbf{I}$. Thus, we may write that 
\begin{equation}
\left\| \delta \mathbb{D}_{i}^{n}\right\| \leq 8\aleph .  \label{SB30}
\end{equation}
By virtue of (\ref{CA05}), (\ref{SB30}) and (\ref{SB40}), and since $0\leq $ 
$\varkappa ,\aleph $ $\leq 1$, we find the following estimation for the
second term in right-hand side of (\ref{SB20}): 
\begin{equation}
\left\| \frac{\varkappa }{8}\left[ \left( \mathbf{v}_{i}^{n}-\mathbf{v}%
_{i+1}^{n}\right) ^{T}\cdot \delta \mathbb{D}_{i}^{n}\right] ^{T}\right\|
\leq \frac{\varkappa }{8}\left\| \mathbf{v}_{i}^{n}-\mathbf{v}%
_{i+1}^{n}\right\| \left\| \delta \mathbb{D}_{i}^{n}\right\| \leq \mu
_{0}C_{v}\Delta t.  \label{SB45}
\end{equation}
By virtue of (\ref{SB05}), (\ref{CA05}), (\ref{SB30}), (\ref{SB40}), and
since $0\leq $ $\xi $ $\leq 1$ and the CFL number $C_{r}=\Delta t\lambda
_{\max }\diagup \Delta x\leq 1$, we find the following estimation for the
fourth and fifth terms in right-hand side of (\ref{SB20}): 
\begin{equation}
\left\| \xi \frac{\left( \Delta t\right) ^{2}}{8\Delta x^{2}}\left[ \delta
\left( \left( \mathbf{A}_{i+1}^{n}\right) ^{2}\cdot \mathbb{B}_{i}\right) %
\right] \cdot \left( \mathbf{v}_{i+1}^{n}-\mathbf{v}_{i}^{n}\right) \right\|
\leq \beta _{A}\Delta t\left\| \delta \mathbf{v}_{i}^{n}\right\| ,\ \beta
_{A}=const,  \label{SB47}
\end{equation}
where $\beta _{A}$ depends on the other constants, namely, on $\alpha _{A}$, 
$\lambda _{\max }$, $C_{v}$, $\mu _{0}$.

In view of (\ref{SB45}) and (\ref{SB47}), Scheme (\ref{SB20}) will be stable
if the following scheme be stable (see \cite[pp. 390-392]{Samarskii 2001}, 
\cite[Theorem 7]{Borisov and Mond 2008}). 
\begin{equation*}
\delta \mathbf{v}_{i+0.5}^{n+0.5}=0.5\left( \delta \mathbf{v}_{i}^{n}+\delta 
\mathbf{v}_{i+1}^{n}\right) +\frac{\varkappa }{8}\mathbb{D}_{i}^{n}\cdot
\left( \delta \mathbf{v}_{i}^{n}-\delta \mathbf{v}_{i+1}^{n}\right) +
\end{equation*}
\begin{equation*}
\xi \frac{\left( \Delta t\right) ^{2}}{8\Delta x^{2}}\left[ \left( \mathbf{A}%
_{i+1}^{n}\right) ^{2}\cdot \mathbb{B}_{i}-\left( \mathbf{A}_{i}^{n}\right)
^{2}\cdot \mathbb{A}_{i}\right] \cdot \left( \delta \mathbf{v}%
_{i+1}^{n}-\delta \mathbf{v}_{i}^{n}\right) +
\end{equation*}
\begin{equation}
\frac{\Delta t}{2\Delta x}\left( \mathbf{A}_{i}^{n}\cdot \delta \mathbf{v}%
_{i}^{n}-\mathbf{A}_{i+1}^{n}\cdot \delta \mathbf{v}_{i+1}^{n}\right) .
\label{SB50}
\end{equation}
We rewrite (\ref{SB50}) to read 
\begin{equation}
\delta \mathbf{v}_{i+0.5}^{n+0.5}=0.5\left( \mathbf{I+E}_{i}^{n}\right)
\cdot \delta \mathbf{v}_{i}^{n}+0.5\left( \mathbf{I-E}_{i+1}^{n}\right)
\cdot \delta \mathbf{v}_{i+1}^{n},  \label{SB60}
\end{equation}
where 
\begin{equation}
\mathbf{E}_{i}^{n}=\frac{\varkappa }{4}\mathbb{D}_{i}^{n}-\xi \frac{\left(
\Delta t\right) ^{2}}{4\left( \Delta x\right) ^{2}}\left[ \left( \mathbf{A}%
_{i+1}^{n}\right) ^{2}\cdot \mathbb{B}_{i}-\left( \mathbf{A}_{i}^{n}\right)
^{2}\cdot \mathbb{A}_{i}\right] +\frac{\Delta t}{\Delta x}\mathbf{A}_{i}^{n},
\label{SB70}
\end{equation}
\begin{equation}
\mathbf{E}_{i+1}^{n}=\frac{\varkappa }{4}\mathbb{D}_{i}^{n}-\xi \frac{\left(
\Delta t\right) ^{2}}{4\left( \Delta x\right) ^{2}}\left[ \left( \mathbf{A}%
_{i+1}^{n}\right) ^{2}\cdot \mathbb{B}_{i}-\left( \mathbf{A}_{i}^{n}\right)
^{2}\cdot \mathbb{A}_{i}\right] +\frac{\Delta t}{\Delta x}\mathbf{A}%
_{i+1}^{n}.  \label{SB80}
\end{equation}
Since the operator $\mathbf{A}$ $(=\partial \mathbf{f}\left( \mathbf{u}%
\right) \diagup \partial \mathbf{u)}$ is Fr\'{e}chet-differentiable, and its
derivative is bounded, see (\ref{SB05}), we get, by virtue of (\ref{SB40})
and \cite[Corollary 3.2.4]{Ortega and Rheinboldt 1970}, that 
\begin{equation}
\left\| \mathbf{E}_{i+1}^{n}-\mathbf{E}_{i}^{n}\right\| =\frac{\Delta t}{%
\Delta x}\left\| \mathbf{A}_{i+1}^{n}-\mathbf{A}_{i}^{n}\right\| \leq \frac{%
\Delta t}{\Delta x}\alpha _{A}\left\| \delta \mathbf{v}_{i}^{n}\right\| \leq
\alpha _{A}C_{v}\Delta t.  \label{SB85}
\end{equation}
We find, in view of the first inequality in (\ref{SB05}) and (\ref{CA203}),
that the spectrum $s\left( \mathbf{E}_{i}^{n}\right) \subset \left[ -\lambda
_{E},\lambda _{E}\right] $, where 
\begin{equation}
\lambda _{E}=\varkappa \aleph -\xi \left( \frac{\Delta t\lambda _{\max }}{%
\Delta x}\right) ^{2}\aleph +\frac{\Delta t}{\Delta x}\lambda _{\max },\quad
\forall i,n.  \label{SB90}
\end{equation}
Hence, by virtue of \cite[Theorem 7]{Borisov and Mond 2008} we find that the
scheme (\ref{SB60}) will be stable if 
\begin{equation}
\underset{\lambda \in \left[ -\lambda _{E},\lambda _{E}\right] }{\max }%
0.5\left( \left| {1+\lambda }\right| +\left| {1-\lambda }\right| \right)
\leqslant 1,\quad \forall i,n.  \label{SB95}
\end{equation}
We obtain from (\ref{SB95}) the following condition for the stability of the
variational scheme (\ref{SB20}) 
\begin{equation}
\left( \varkappa -\xi C_{r}^{2}\right) \aleph +C_{r}\leq 1,\quad C_{r}=\frac{%
\Delta t\lambda _{\max }}{\Delta x}\leq 1.  \label{SB100}
\end{equation}
Thus, in view of Theorem \ref{iffStab} (see also \cite[Theorem 3]{Borisov
and Mond 2008}), the scheme (\ref{SA30}) will be stable if (\ref{SB100}) be
valid.

Let us note that the parameters $\varkappa $ and $\xi $ are taken as
constant in Scheme (\ref{SA30}). However, in practice, it can be convenient
to take that $\varkappa _{i}^{n}=\varkappa (\mathbf{v}_{i}^{n})$ and $\xi
_{i}^{n}=\xi (\mathbf{v}_{i}^{n})$. In such a case the condition, (\ref
{SB100}), for the stability of (\ref{SA30}) can be grounded in perfect
analogy to the above, if 
\begin{equation}
\left\| \delta \varkappa _{i}^{n}\right\| =\left\| \frac{\partial \varkappa
_{i}^{n}}{\partial \mathbf{v}_{i}^{n}}\cdot \delta \mathbf{v}%
_{i}^{n}\right\| \leq \alpha _{\varkappa }\left\| \delta \mathbf{v}%
_{i}^{n}\right\| ,\ \left\| \delta \xi _{i}^{n}\right\| =\left\| \frac{%
\partial \xi _{i}^{n}}{\partial \mathbf{v}_{i}^{n}}\cdot \delta \mathbf{v}%
_{i}^{n}\right\| \leq \alpha _{\xi }\left\| \delta \mathbf{v}%
_{i}^{n}\right\| ,  \label{SB110}
\end{equation}
where $\alpha _{\varkappa }$, $\alpha _{\xi }$ $=$ $const$.

\subsection{Operator splitting schemes\label{OSS}}

By virtue of the operator-splitting idea \cite{Bereux and Sainsaulieu 1997}, 
\cite{Du Tao et al. 2003}, \cite{Gosse L. 2000}, \cite{LeVeque 2002} (see
also LOS in \cite{Samarskii 2001}), the following chain of equations
corresponds to the problem (\ref{INA10}) 
\begin{equation}
\frac{1}{2}\frac{\partial \mathbf{U}}{\partial t}=\frac{1}{\tau }\mathbf{q}%
\left( \mathbf{U}\right) ,\quad t_{n}<t\leq t_{n+0.5},\quad \mathbf{U}\left( 
\mathbf{x},t_{n}\right) =\mathbf{U}^{n}\left( \mathbf{x}\right) ,
\label{OS20}
\end{equation}
\begin{equation}
\frac{1}{2}\frac{\partial \mathbf{U}}{\partial t}+\sum_{j=1}^{N}\frac{%
\partial }{\partial x_{j}}\mathbf{f}_{j}\left( \mathbf{U}\right)
=0,\,t_{n+0.5}<t\leq t_{n+1},\,\mathbf{U}\left( \mathbf{x},t_{n+0.5}\right) =%
\mathbf{U}^{n+0.5}\left( \mathbf{x}\right) ,  \label{OS10}
\end{equation}
where $\mathbf{U}^{n}\left( \mathbf{x}\right) $ denotes the solution to (\ref
{OS10}) at $t=t_{n}$, $\mathbf{U}^{n+0.5}\left( \mathbf{x}\right) $ denotes
the solution to (\ref{OS20}) at $t=t_{n+0.5}$. If a high-resolution method
is used directly for the homogeneous conservation law (\ref{OS10}), then it
is natural to use a high-order scheme for (\ref{OS20}). As applied to, in
general, stiff ($\tau \ll 1$) System (\ref{INA10}), the second order schemes
can be constructed on the basis of operator-splitting techniques with ease
if (\ref{OS20}) will be approximated by an implicit scheme and (\ref{OS10})
by an explicit one, see Proposition \ref{First to Second order scheme} in
Section \ref{Appendix1}. As an example, let us develop a central scheme for
a 1-D version of (\ref{INA10}). After operator-splitting, the 1-D equation
can be represented in the form 
\begin{equation}
\frac{1}{2}\frac{\partial \mathbf{U}}{\partial t}=\frac{1}{\tau }\mathbf{q}%
\left( \mathbf{U}\right) ,\quad t_{n}<t\leq t_{n+0.25},\quad \mathbf{U}%
\left( x,t_{n}\right) =\mathbf{U}^{n}\left( x\right) ,  \label{Os22}
\end{equation}
\begin{equation}
\frac{1}{2}\frac{\partial \mathbf{U}}{\partial t}+\frac{\partial }{\partial x%
}\mathbf{f}\left( \mathbf{U}\right) =0,\ t_{n+0.25}<t\leq t_{n+0.5},\ 
\mathbf{U}\left( x,t_{n+0.25}\right) =\mathbf{U}^{n+0.25}\left( x\right) .
\label{OS40}
\end{equation}
Let us first consider the case when the following first-order implicit
scheme be used for (\ref{Os22}) 
\begin{equation}
\mathbf{v}_{i}^{n+0.25}=\mathbf{v}_{i}^{n}+\frac{\Delta t}{2\tau }\mathbf{q}%
\left( \mathbf{v}_{i}^{n+0.25}\right) ,  \label{OS30}
\end{equation}
and a central scheme with nonstaggered grid cells will be used for (\ref
{OS40}). To eliminate the staggering in (\ref{CA210}), we can define, e.g. 
\cite{Jiang et al. 1998}, the nonstaggered cell-average as the average of
its two neighboring staggered cell-averages. Then, by virtue of (\ref{CA210}%
), we find 
\begin{equation*}
\mathbf{v}_{i}^{n+0.5}=0.25\left( \mathbf{v}_{i-1}^{n+0.25}+2\mathbf{v}%
_{i}^{n+0.25}+\mathbf{v}_{i+1}^{n+0.25}\right) -\varkappa \frac{\Delta x}{16}%
\left( \mathbf{d}_{i+1}^{n+0.25}-\mathbf{d}_{i-1}^{n+0.25}\right) -
\end{equation*}
\begin{equation}
\frac{\Delta t}{4\Delta x}\left( \mathbf{f}_{i+1}^{n+0.25}-\mathbf{f}%
_{i-1}^{n+0.25}\right) .  \label{OS60}
\end{equation}
It is clear that Scheme (\ref{OS60}) approximates (\ref{OS40}) with the
accuracy $O(\Delta t+\left( \Delta x\right) ^{2})$, however, in view of
Proposition \ref{First to Second order scheme} in Section \ref{Appendix1},
Scheme (\ref{OS30})-(\ref{OS60}), taken as a whole, is of the second order
approximation for the 1-D version of (\ref{INA10}).

Let us develop another nonstaggered central scheme approximating a 1-D
version of (\ref{INA10}) with the accuracy $O(\left( \Delta t\right)
^{2}+\left( \Delta x\right) ^{2})$ and such that its components (after
operator splitting) will be of the second order. It can be done on the basis
of the second order scheme (\ref{OS30}), (\ref{OS60}) with ease. Actually,
adding to and subtracting from Equation (\ref{FS15}) (see Section \ref
{Appendix1}, Proposition \ref{First to Second order scheme}), rewritten for $%
t_{n}<t\leq t_{n+0.5}$, the same quantity, we obtain (after operator
splitting) the following scheme, instead of (\ref{OS30}), (\ref{OS60}), 
\begin{equation}
\mathbf{v}_{i}^{n+0.25}=\mathbf{v}_{i}^{n}+\frac{\Delta t}{\tau }\mathbf{q}%
_{i}^{n+0.25}-\frac{\left( \Delta t\right) ^{2}}{32}\left( \frac{\partial
^{2}\mathbf{U}}{\partial t^{2}}\right) _{i}^{n+0.25},  \label{SOS10}
\end{equation}
\begin{equation*}
\mathbf{v}_{i}^{n+0.5}=0.25\left( \mathbf{v}_{i-1}^{n+0.25}+2\mathbf{v}%
_{i}^{n+0.25}+\mathbf{v}_{i+1}^{n+0.25}\right) -\varkappa \frac{\Delta x}{16}%
\left( \mathbf{d}_{i+1}^{n+0.25}-\mathbf{d}_{i-1}^{n+0.25}\right) +
\end{equation*}
\begin{equation}
\frac{\left( \Delta t\right) ^{2}}{32}\left( \frac{\partial ^{2}\mathbf{U}}{%
\partial t^{2}}\right) _{i}^{n+0.25}-\frac{\Delta t}{4\Delta x}\left( 
\mathbf{f}_{i+1}^{n+0.25}-\mathbf{f}_{i-1}^{n+0.25}\right) .  \label{SOS20}
\end{equation}
Thus, Scheme (\ref{SOS10}) as well as Scheme (\ref{SOS20}) are of the second
order, and Scheme (\ref{SOS10})-(\ref{SOS20}), taken as a whole, is of the
second order as well.

Using Taylor series expansion, and central differencing, we find 
\begin{equation*}
\mathbf{v}_{i}^{n+0.125}=\mathbf{v}_{i}^{n+0.25}-\frac{\Delta t}{8}\left( 
\frac{\partial \mathbf{U}}{\partial t}\right) _{i}^{n+0.25}+
\end{equation*}
\begin{equation}
\frac{1}{2}\left( \frac{\Delta t}{8}\right) ^{2}\left( \frac{\partial ^{2}%
\mathbf{U}}{\partial t^{2}}\right) _{i}^{n+0.25}+O\left( \left( \Delta
t\right) ^{3}\right) ,  \label{SOS25}
\end{equation}
\begin{equation}
\mathbf{v}_{i}^{n+0.25}=\mathbf{v}_{i}^{n}+\frac{\Delta t}{4}\left( \frac{%
\partial \mathbf{U}}{\partial t}\right) _{i}^{n+0.125}+O\left( \left( \Delta
t\right) ^{3}\right) .  \label{SOS27}
\end{equation}
We obtain, by virtue of (\ref{SA20}), (\ref{OS40}), that 
\begin{equation}
\frac{\partial ^{2}\mathbf{U}}{\partial t^{2}}=-2\frac{\partial }{\partial t}%
\left( \frac{\partial \mathbf{f}}{\partial x}\right) =-2\frac{\partial }{%
\partial x}\left( \frac{\partial \mathbf{f}}{\partial t}\right) =4\frac{%
\partial }{\partial x}\left( \mathbf{A}^{2}\cdot \frac{\partial \mathbf{U}}{%
\partial x}\right) ,  \label{SOS40}
\end{equation}
where $\mathbf{A=}\partial \mathbf{f\diagup }\partial \mathbf{U}$. Then 
\begin{equation*}
\left[ \frac{\partial }{\partial x}\left( \mathbf{A}^{2}\cdot \frac{\partial 
\mathbf{U}}{\partial x}\right) \right] _{i}^{n+0.25}=\frac{1}{\Delta x}\left[
\left( \mathbf{A}_{i+0.5}^{n+0.25}\right) ^{2}\cdot \frac{\mathbf{v}%
_{i+1}^{n+0.25}-\mathbf{v}_{i}^{n+0.25}}{\Delta x}\right. -
\end{equation*}
\begin{equation}
\left. \left( \mathbf{A}_{i-0.5}^{n+0.25}\right) ^{2}\cdot \frac{\mathbf{v}%
_{i}^{n+0.25}-\mathbf{v}_{i-1}^{n+0.25}}{\Delta x}\right] +O\left( \left(
\Delta x\right) ^{2}\right) ,  \label{SOS50}
\end{equation}
where $(\mathbf{A}_{i+0.5}^{n+0.25})^{2}=0.5\left( (\mathbf{A}%
_{i+1}^{n+0.25})^{2}+(\mathbf{A}_{i}^{n+0.25})^{2}\right) $. By virtue of (%
\ref{OS20}), (\ref{SOS25})-(\ref{SOS50}), we rewrite Scheme (\ref{SOS10})-(%
\ref{SOS20}) to read 
\begin{equation}
\mathbf{v}_{i}^{n+0.125}=\mathbf{v}_{i}^{n+0.25}-\frac{\Delta t}{8\tau }%
\left( \mathbf{q}_{i}^{n+0.125}+\mathbf{q}_{i}^{n+0.25}\right) ,
\label{SOS55}
\end{equation}
\begin{equation}
\mathbf{v}_{i}^{n+0.25}=\mathbf{v}_{i}^{n}+\frac{\Delta t}{2\tau }\mathbf{q}%
_{i}^{n+0.125},  \label{SOS60}
\end{equation}
\begin{equation*}
\mathbf{v}_{i}^{n+0.5}=0.25\left( \mathbf{v}_{i-1}^{n+0.25}+2\mathbf{v}%
_{i}^{n+0.25}+\mathbf{v}_{i+1}^{n+0.25}\right) -\varkappa \frac{\Delta x}{16}%
\left( \mathbf{d}_{i+1}^{n+0.25}-\mathbf{d}_{i-1}^{n+0.25}\right) +
\end{equation*}
\begin{equation*}
\frac{\xi \left( \Delta t\right) ^{2}}{8\left( \Delta x\right) ^{2}}\left[
\left( \mathbf{A}_{i+0.5}^{n+0.25}\right) ^{2}\cdot \left( \mathbf{v}%
_{i+1}^{n+0.25}-\mathbf{v}_{i}^{n+0.25}\right) \right. -\left. \left( 
\mathbf{A}_{i-0.5}^{n+0.25}\right) ^{2}\cdot \left( \mathbf{v}_{i}^{n+0.25}-%
\mathbf{v}_{i-1}^{n+0.25}\right) \right]
\end{equation*}
\begin{equation}
-\frac{\Delta t}{4\Delta x}\left( \mathbf{f}_{i+1}^{n+0.25}-\mathbf{f}%
_{i-1}^{n+0.25}\right) ,  \label{SOS70}
\end{equation}
where $\xi $ is introduced in the third term in the right-hand side of (\ref
{SOS70}) by analogy with Scheme (\ref{SA30}), and, hence, $0\leq \xi \leq 1$%
. If $\xi =0$, then (\ref{SOS70}) coincides with (\ref{OS60}), being $%
O(\Delta t+\left( \Delta x\right) ^{2})$ accurate. If $\varkappa =1$ and $%
\xi =1$, then Scheme (\ref{SOS55})-(\ref{SOS70}) approximates the 1-D
version of (\ref{INA10}) with the accuracy $O(\left( \Delta t\right)
^{2}+\left( \Delta x\right) ^{2})$.

By analogy with the scheme COS2, (\ref{SA30}), we find the conditions for
the stability of Scheme (\ref{SOS70}) using its scheme in variations. Scheme
(\ref{SOS70}) will be stable if 
\begin{equation}
\varkappa \aleph -0.5\xi C_{r}^{2}+C_{r}\leq 1,\quad C_{r}=\frac{\Delta
t\lambda _{\max }}{\Delta x}\leq 1.  \label{SOS80}
\end{equation}

Let us note that in practice (e.g., \cite{LeVeque 2002}, \cite{Samarskii
2001}) the operator-splitting techniques find a wide range of application in
designing economical schemes for Eq. (\ref{OS10}) in the domains of
complicated geometry. The resulting method, in general, will be only
first-order accurate in time because of the splitting \cite{LeVeque 2002}, 
\cite{Samarskii 2001}. Thus, in line with established practice we will
replace the multidimensional Eq. (\ref{OS10}) by the chain of the
one-dimensional equations: 
\begin{equation}
\frac{1}{2N}\frac{\partial \mathbf{U}_{j}}{\partial t}+\frac{\partial }{%
\partial x_{j}}\mathbf{f}_{j}\left( \mathbf{U}_{j}\right) =0,\
t_{n+0.5+(j-1)\diagup (2N)}<t\leq t_{n+0.5+j\diagup (2N)},  \label{SOS90}
\end{equation}
where$\ \mathbf{U}_{j}\left( x,t_{n+0.5+(j-1)\diagup (2N)}\right) =\mathbf{U}%
_{j-1}\left( x,t_{n+0.5+(j-1)\diagup (2N)}\right) $, $\ j=1,2,\ldots N$, $%
\mathbf{U}_{0}\left( x,t_{n+0.5}\right) $ denotes the solution to (\ref{OS20}%
) at $t=t_{n+0.5}$. Eq. (\ref{OS20}) will be approximated by a first-order
implicit scheme or a second-order implicit Runge-Kutta scheme. In
particular, it will be used the following Runge-Kutta scheme 
\begin{equation}
\mathbf{v}_{i}^{n+0.25}=\mathbf{v}_{i}^{n+0.5}-\frac{\Delta t}{2\tau }%
\mathbf{q}\left( \mathbf{v}_{i}^{n+0.5}\right) ,\ \mathbf{v}_{i}^{n+0.5}=%
\mathbf{v}_{i}^{n}+\frac{\Delta t}{\tau }\mathbf{q}\left( \mathbf{v}%
_{i}^{n+0.25}\right) ,  \label{SOS100}
\end{equation}
since this scheme possesses a discrete analogy to the continuous asymptotic
limit.

\section{Exemplification and discussion\label{Exemplification and discussion}%
}

In this section, we are mainly concerned with verification of the second
order central scheme COS2, (\ref{SA30}).

\subsection{Scalar non-linear equation}

As the first stage in the verification, we will focus on the following
scalar 1-D version of the problem (\ref{INA10}): 
\begin{equation}
\frac{\partial u}{\partial t}+\frac{\partial }{\partial x}f\left( u\right)
=0,\quad x\in \mathbb{R},\ 0<t\leq T_{\max };\quad \left. u\left( x,t\right)
\right| _{t=0}=u^{0}\left( x\right) .  \label{VS10}
\end{equation}
We will solve the inviscid Burgers equation (i.e. $f\left( u\right) \equiv
u^{2}\diagup 2$) with the following initial condition 
\begin{equation}
u\left( x,0\right) =\left\{ 
\begin{array}{cc}
u_{0}, & x\in \left( h_{L},h_{R}\right) \\ 
0, & x\notin \left( h_{L},h_{R}\right)
\end{array}
\right. ,\quad h_{R}>h_{L},\ u_{0}=const\neq 0.  \label{VS70}
\end{equation}
The exact solution to (\ref{VS10}), (\ref{VS70}) is given by 
\begin{equation}
u\left( x,t\right) =\left\{ 
\begin{array}{cc}
u_{1}\left( x,t\right) , & 0<t\leq T \\ 
u_{2}\left( x,t\right) , & t>T
\end{array}
\right. ,  \label{VS80}
\end{equation}
where $T=2S\diagup u_{0}$, $S=h_{R}-h_{L}$, 
\begin{equation}
u_{1}\left( x,t\right) =\left\{ 
\begin{array}{cc}
\frac{x-h_{L}}{b-h_{L}}u_{0}, & h_{L}<x\leq b,\ b=u_{0}t+h_{L} \\ 
u_{0}, & b<x\leq 0.5u_{0}t+h_{R} \\ 
0, & x\leq h_{L}\ or\ x>0.5u_{0}t+h_{R}
\end{array}
\right. ,  \label{VS90}
\end{equation}
\begin{equation}
u_{2}\left( x,t\right) =\left\{ 
\begin{array}{cc}
\frac{2S\left( x-h_{L}\right) }{\left( L-h_{L}\right) ^{2}}u_{0}, & 
h_{L}<x\leq L \\ 
0, & x\leq h_{L}\ or\ x>L
\end{array}
\right. ,  \label{VS100}
\end{equation}
\begin{equation}
L=2\sqrt{S^{2}+0.5u_{0}S\left( t-T\right) }+h_{L}.  \label{VS110}
\end{equation}

First, it will be used Scheme (\ref{SA30}) under $\xi =0$, i.e. the first
order in time central scheme COS1, (\ref{CA210}). The numerical solutions
were computed on a uniform grid with spatial increments of $\Delta x=0.01$,
the velocity $u_{0}=1$ in (\ref{VS70}), $h_{L}=0.2$, $h_{R}=1$, the
monotonicity parameter $\aleph =0.5$, the CFL number $Cr$ $\equiv $ $%
u_{0}\Delta t\diagup \Delta x=$ $0.5$, and the parameter $\varkappa =1$ in (%
\ref{CA210}). The results of simulation are depicted with the exact solution
in Figure \ref{K2COS01}. 
\begin{figure}[h]
\centerline{\includegraphics[width=11.50cm,height=3.8cm]{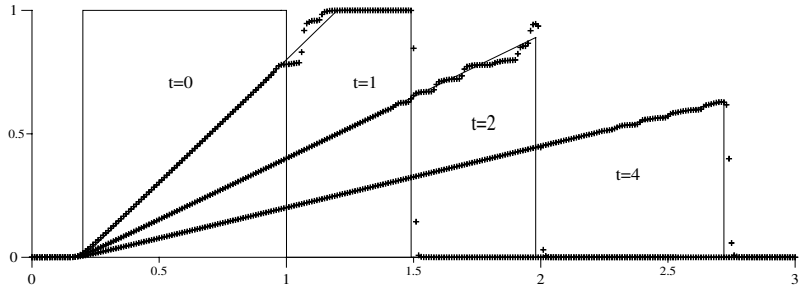}}
\caption{Inviscid Burgers equation. The scheme COS1 ($\varkappa =1$) versus
the analytical solution. Crosses: numerical solution; Solid line: analytical
solution and initial data. $C_{r}=\aleph =0.5$, $\Delta x=0.01$. }
\label{K2COS01}
\end{figure}
We note (Figure \ref{K2COS01}) that the first order scheme, (\ref{CA210}),
exhibits a typical second-order nature, however spurious solutions are
produced by the scheme. Notice, the numerical simulations were performed
with such values of the parameter $\varkappa $, CFL number, $Cr$, and
monotonicity parameter, $\aleph $, that (\ref{SB100}) was not violated. As
it can be seen in Figure \ref{K2COS01}, the boundary maximum principle is
not violated by the scheme, i.e., the maximum positive values of the
dependent variable, $v$, occur at the boundary $t=0$. It is interesting that
the spurious solution (see Figure \ref{K2COS01}) produced by the scheme COS1
has the monotonicity property \cite{Harten 1983}, since no new local extrema
in $x$ are created as well as the value of a local minimum is non-decreasing
and the value of a local maximum is non-increasing.

Let us note that the problem of building free-of-spurious-oscillations
schemes is, in general, unsettled up to the present. Even the best modern
high-resolution schemes can produce spurious oscillations, and these
oscillations are often of ENO type (see, e.g., \cite{Pareschi et al. 2005}
and references therein). We found that the oscillations produced by the COS1
scheme, (\ref{CA210}), are of ENO type, namely their amplitude decreases
rapidly with decreasing the time-increment $\Delta t$, and the oscillations
virtually disappear under a relatively low CFL number, $Cr$ $\leq $ $0.15$.
However, the reduction of the CFL number causes some smearing of the
solution. The spurious oscillations (see Figure \ref{K2COS01}) can be
eradicated without reduction CFL number, but decreasing the parameter $%
\varkappa $. Particularly, the spurious oscillations disappear if $\varkappa
=2\diagup 3$, $C_{r}=0.5$, however, this introduces more numerical smearing
than in the case of the CFL number reduction. Satisfactory results are
obtained under $\varkappa =0.82$ ($C_{r}=0.5$). The results of simulations
are not depicted here.

To gain insight to why the scheme COS1, (\ref{CA210}), can exhibit spurious
solutions, let us consider the, so called, \emph{first differential
approximation} of this scheme (\cite[p. 45]{Ganzha and Vorozhtsov 1996}, 
\cite[p. 376]{Samarskiy and Gulin 1973}; see also `modified equations' in 
\cite[p. 45]{Ganzha and Vorozhtsov 1996}, \cite{LeVeque 2002}, \cite{Morton
1996}). As reported in \cite{Ganzha and Vorozhtsov 1996}, \cite{Samarskiy
and Gulin 1973}, this heuristic method was originally presented by Hirt
(1968) (see \cite[p. 45]{Ganzha and Vorozhtsov 1996}) as well as by Shokin
and Yanenko (1968) (see \cite[p. 376]{Samarskiy and Gulin 1973}), and has
since been widely employed in the development of stable difference schemes
for PDEs.

We found that the local truncation error, $\psi $, for the scheme COS1 can
be written in the following form 
\begin{equation*}
\psi =\frac{\left( 1-\varkappa \right) \left( \Delta x\right) ^{2}}{4\Delta t%
}\frac{\partial ^{2}u\left( x,t\right) }{\partial x^{2}}+\frac{\Delta t}{4}%
\frac{\partial ^{2}f\left( u\right) }{\partial t\partial x}+
\end{equation*}
\begin{equation}
O\left( \frac{\left( \Delta x\right) ^{4}}{\Delta t}+\left( \Delta t\right)
^{2}+\left( \Delta x\right) ^{2}\right) .  \label{VS190}
\end{equation}
By virtue of (\ref{VS190}), we find the first differential approximation of
the scheme COS1 
\begin{equation}
\frac{\partial u}{\partial t}+\frac{\partial f\left( u\right) }{\partial x}=%
\frac{\Delta t}{4}\frac{\partial }{\partial x}\left( B\frac{\partial u\left(
x,t\right) }{\partial x}\right) ,  \label{VS200}
\end{equation}
where $B=\left( 1-\varkappa \right) \left( \Delta x\diagup \Delta t\right)
^{2}-A^{2}$. The term in right-hand side of (\ref{VS200}) will be
dissipative if 
\begin{equation}
\left( 1-\varkappa \right) \left( \frac{\Delta x}{\Delta t}\right)
^{2}-A^{2}>0,\ \Longrightarrow \ C_{r}^{2}<1-\varkappa .  \label{VS210}
\end{equation}
Thus, the scheme COS1, (\ref{CA210}), is non-dissipative under $\varkappa =1$%
, and hence can produce spurious oscillations. Notice, if $\varkappa =0.82$,
then we obtain from (\ref{VS210}) that $C_{r}<0.42$. Nevertheless, as it is
reported above, satisfactory results can be obtained under $C_{r}=0.5$ as
well.

So then, the notion of first differential approximation has enabled us to
understand that the spurious solutions exhibited by the scheme COS1, (\ref
{CA210}), are mainly associated with the negative numerical viscosity
introduced to obtain the scheme of the second order in space, i.e. $O(\left(
\Delta x\right) ^{2}+\Delta t)$. Let us consider the scheme COS2, (\ref{SA30}%
), approximating (\ref{VS10}) with the accuracy $O(\left( \Delta x\right)
^{2}+\left( \Delta t\right) ^{2})$. Notice, the second order scheme COS2, (%
\ref{SA30}), is nothing more than the scheme COS1, (\ref{CA210}), with the
additional non-negative numerical viscosity. To test the scheme COS2, (\ref
{SA30}), the inviscid Burgers equation was solved under the initial
condition (\ref{VS70}). The numerical solutions were computed under the same
values of parameters as in the case of the scheme COS1, but $C_{r}=1$. The
results of simulation are depicted with the exact solution in Figure \ref
{K2COS02}.

\begin{figure}[h]
\centerline{\includegraphics[width=11.50cm,height=3.8cm]{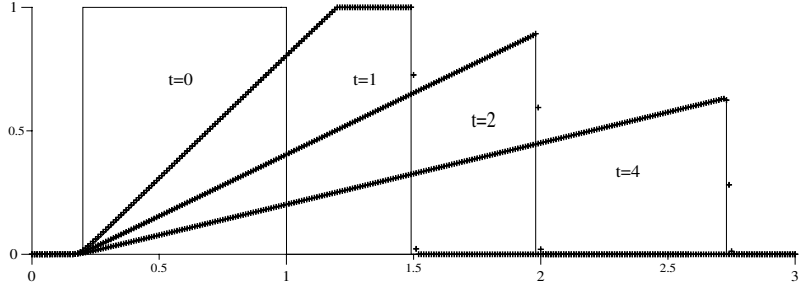}}
\caption{Inviscid Burgers equation. The scheme COS2 ($\protect\xi =1$, $%
\varkappa =1$) versus the analytical solution. Crosses: numerical solution;
Solid line: analytical solution and initial data. $C_{r}=1$, $\aleph =0.5$, $%
\Delta x=0.01$. }
\label{K2COS02}
\end{figure}

We note (Figure \ref{K2COS02}) that the scheme COS2, (\ref{SA30}), exhibits
a typical second-order nature without any spurious oscillations. Increasing
the value of $\aleph $ (up to $\aleph =1$) leads to a minor improvement of
the numerical solutions, whereas decreasing the value of $Cr$ leads to a
mild smearing of the solutions. The results of simulations are not depicted
here.

\subsection{Hyperbolic conservation laws with relaxation}

Let us consider the model system of hyperbolic conservation laws with
relaxation developed in \cite{Pember 1993a}: 
\begin{equation}
\frac{\partial w}{\partial t}+\frac{\partial }{\partial x}\left( \frac{1}{2}%
u^{2}+aw\right) =0,  \label{VS240}
\end{equation}
\begin{equation}
\frac{\partial z}{\partial t}+\frac{\partial }{\partial x}az=\frac{1}{\tau }%
Q(w,z),  \label{VS250}
\end{equation}
where 
\begin{equation}
Q(w,z)=z-m(u-u_{0}),\quad u=w-q_{0}z,  \label{VS260}
\end{equation}
$\tau $ denotes the relaxation time of the system, $q_{0}$, $m$, $a$, and $%
u_{0}$ are constants. The Jacobian, $\mathbf{A}$, can be written in the form 
\begin{equation}
\mathbf{A=}\left\{ 
\begin{array}{cc}
w-q_{0}z+a & -q_{0}\left( w-q_{0}z\right) \\ 
0 & a
\end{array}
\right\} .  \label{VS270}
\end{equation}
The system (\ref{VS240})-(\ref{VS250}) has the following frozen \cite{Pember
1993a} characteristic speeds $\lambda _{1}$ $=$ $a$, $\lambda _{2}$ $=$ $u+a$%
. The equilibrium equation for (\ref{VS240})-(\ref{VS250}) is 
\begin{equation}
\frac{\partial w}{\partial t}+\frac{\partial }{\partial x}\left( \frac{1}{2}%
u_{\ast }^{2}+aw\right) =0,  \label{VS280}
\end{equation}
where 
\begin{equation}
u_{\ast }=w-q_{0}z_{\ast },\quad z_{\ast }=\frac{m}{1+mq_{0}}\left(
w-u_{0}\right) .  \label{VS290}
\end{equation}
The equilibrium characteristic speed $\lambda _{\ast }$ can be written in
the form 
\begin{equation}
\lambda _{\ast }\left( w\right) =\frac{u_{\ast }\left( w\right) }{1+mq_{0}}%
+a.  \label{VS300}
\end{equation}

Pember's rarefaction test problem is to find the solution $\left\{
w,z\right\} $ to (\ref{VS240})-(\ref{VS250}), and hence the function $%
u=u\left( x,t\right) $, under $\tau \rightarrow 0$, and where 
\begin{equation}
\left\{ w,z\right\} =\left\{ 
\begin{array}{cc}
\left\{ w_{L},z_{\ast }\left( w_{L}\right) \right\} , & x<x_{0} \\ 
\left\{ w_{R},z_{\ast }\left( w_{R}\right) \right\} , & x>x_{0}
\end{array}
\right. ,  \label{VS310}
\end{equation}
\begin{equation}
0<u_{L}=w_{L}-q_{0}z_{\ast }\left( w_{L}\right) <u_{R}=w_{R}-q_{0}z_{\ast
}\left( w_{R}\right) .  \label{VS320}
\end{equation}
The analytical solution of this problem can be found in \cite{Pember 1993a}.
The parameters of the model system are assumed as follows: $q_{0}=-1$, $m=-1$%
, $u_{0}=3$, $a=\pm 1$, $\tau =10^{-8}$. The initial conditions of the
rarefaction problem are defined by 
\begin{equation}
u_{L}=2,\ \Longrightarrow \ z_{L}=m\left( u_{L}-u_{0}\right) =1,\
w_{L}=u_{L}+q_{0}z_{L}=1,  \label{VS330}
\end{equation}
\begin{equation}
u_{R}=3,\ \Longrightarrow \ z_{R}=m\left( u_{R}-u_{0}\right) =0,\
w_{R}=u_{R}+q_{0}z_{R}=3.  \label{VS340}
\end{equation}
The position of the initial discontinuity, $x_{0}$, is set according to the
value of $a$ so that the solutions of all the rarefaction problems are
identical \cite{Pember 1993a}. Let a position, $x_{R}^{t}$, of leading edge
or a position, $x_{L}^{t}$, of trailing edge of the rarefaction be known
(e.g., $x_{R}^{t}=0.85$, $x_{L}^{t}=0.7$ in \cite{Pember 1993a}), then 
\begin{equation}
x_{0}=x_{R}^{t}-\left( \frac{u_{R}}{1+mq_{0}}+a\right) t=x_{L}^{t}-\left( 
\frac{u_{L}}{1+mq_{0}}+a\right) t.  \label{VS350}
\end{equation}
At $t=0.3$, under (\ref{VS330})-(\ref{VS340}) we have \cite{Pember 1993a} 
\begin{equation}
u=\left\{ 
\begin{array}{cc}
2, & x\leq 0.7 \\ 
2+\frac{x-0.7}{0.85-0.7}, & 0.7<x<0.85 \\ 
3, & x\geq 0.85
\end{array}
\right. .  \label{VS360}
\end{equation}
The results of simulations, based upon the scheme COS2, (\ref{SA30}),
together with (\ref{SOS100}), under different values of the parameter $a$ ($%
a=1$, $a=-1$) and different values of a grid spacing, $\Delta x$, are
depicted in Figure \ref{RAREF1}.

\begin{figure}[tbph]
\centerline{\includegraphics[width=14.cm,height=6.cm]{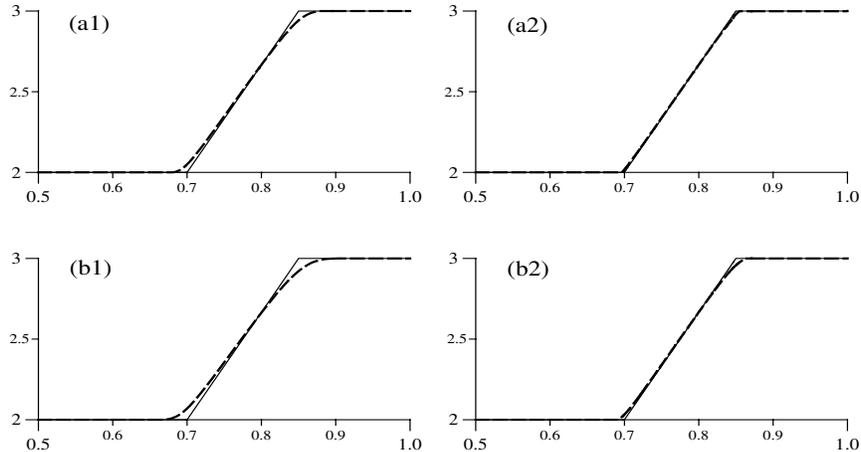}}
\caption{Pember's rarefaction test problem. The second-order scheme COS2 ($%
\protect\xi =1$, $\varkappa =1$) versus the analytical solution for $u$.
Dashed line: numerical solution; Solid line: analytical solution. Time $%
t=0.3 $, Courant number $C_{r}=1$, monotonicity parameter $\aleph =1$. (a1): 
$\Delta x=10^{-3}$, $a=1$; (a2): $\Delta x=2.5\times 10^{-4}$, $a=1$; (b1): $%
\Delta x=10^{-3}$, $a=-1$; (b2): $\Delta x=2.5\times 10^{-4}$, $a=-1$.}
\label{RAREF1}
\end{figure}

One can clearly see (Figure \ref{RAREF1}) that the scheme COS2 is free from
spurious oscillations. Let us also note that the results generated by the
scheme COS2 are less accurate in the case of negative value of $a$ than
those in the case of positive value of $a$. Specifically, in the numerical
solutions produced under $a=-1$, the representations of the trailing and
leading edges of the rarefaction are more smeared than those in the
solutions produced under $a=1$. Notice, under some negative value of $a$,
the frozen and the equilibrium characteristic speeds do not all have the
same sign.

\subsection{1-D Euler equation of gas dynamics}

In this subsection we apply the second order scheme COS2, (\ref{SA30}), to
the Euler equations of gamma-law gas: 
\begin{equation}
\frac{\partial \mathbf{u}\left( x,t\right) }{\partial t}+\frac{\partial }{%
\partial x}\mathbf{F}\left( \mathbf{u}\right) =0,\quad x\in \mathbb{R},\
t>0;\quad \mathbf{u}\left( x,0\right) =\mathbf{u}^{0}\left( x\right) ,
\label{VE10}
\end{equation}
\begin{equation}
\mathbf{u\equiv }\left\{ u_{1},u_{2},u_{3}\right\} ^{T}=\left\{ \rho ,\rho
v,e\right\} ^{T},\quad \mathbf{F}\left( \mathbf{u}\right) =\left\{ \rho
v,\rho v^{2}+p,\left( e+p\right) v\right\} ^{T},  \label{VE20}
\end{equation}
\begin{equation}
e=\frac{p}{\gamma -1}+\frac{1}{2}\rho v^{2},\quad \gamma =const,
\label{VE30}
\end{equation}
where $\rho $, $v$, $p$, $e$ denote the density, velocity, pressure, and
total energy respectively. We consider the Riemann problem subject to
Riemann initial data 
\begin{equation}
\mathbf{u}^{0}\left( x\right) =\left\{ 
\begin{array}{cc}
\mathbf{u}_{L} & x<x_{0} \\ 
\mathbf{u}_{R} & x>x_{0}
\end{array}
\right. ,\quad \mathbf{u}_{L},\mathbf{u}_{R}=const.  \label{VE40}
\end{equation}
The analytic solution to the Riemann problem can be found in \cite[Sec. 14]
{LeVeque 2002}.

First we solve the shock tube problem (see, e.g., \cite{Balaguer and Conde
2005}, \cite{LeVeque 2002}, \cite{Liu and Tadmor 1998}) with Sod's initial
data: 
\begin{equation}
\mathbf{u}_{L}=\left\{ 
\begin{array}{c}
1 \\ 
0 \\ 
2.5
\end{array}
\right\} ,\quad \mathbf{u}_{R}=\left\{ 
\begin{array}{c}
0.125 \\ 
0 \\ 
0.25
\end{array}
\right\} .  \label{VE50}
\end{equation}
Following Balaguer and Conde \cite{Balaguer and Conde 2005} as well as Liu
and Tadmor \cite{Liu and Tadmor 1998} we assume that the computational
domain is $0\leq x\leq 1$; the point $x_{0}$ is located at the middle of the
interval $\left[ 0,1\right] $, i.e. $x_{0}=0.5$; the equations (\ref{VE10})
are integrated up to $t=0.16$ on a spatial grid with 200 nodes as in \cite
{Balaguer and Conde 2005} and in \cite{Liu and Tadmor 1998}. The CFL number
is taken to be $Cr=1$ in contrast to \cite{Balaguer and Conde 2005} and \cite
{Liu and Tadmor 1998}, where the simulations were done under $\Delta
t=0.1\Delta x$ (i.e. $0.13\lesssim Cr\lesssim 0.22$). The results of
simulations are depicted in Figure \ref{C1A5K8S}. 
\begin{figure}[tbph]
\centerline{\includegraphics[width=11.9cm,height=15.4cm]{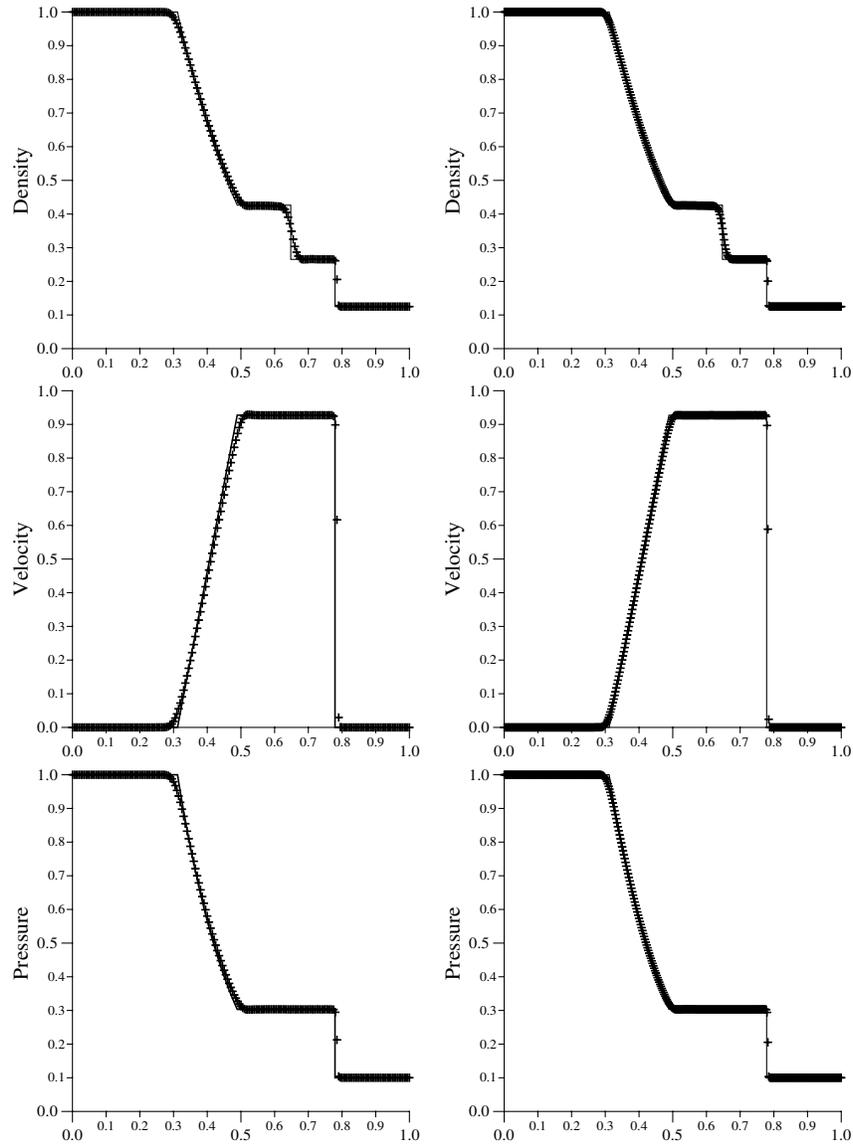}}
\caption{Sod's problem. The scheme COS2 under $C_{r}=1$, $\aleph =0.5$, $%
\protect\xi =1$, $\varkappa =0.8$ versus the analytical solution. Time $%
t=0.16$, spatial increment $\Delta x=0.005$ (left column) and $\Delta
x=0.0025$ (right column). }
\label{C1A5K8S}
\end{figure}

The results depicted in Figure \ref{C1A5K8S} (left column) are not worse in
comparison to the corresponding third-order central results of \cite[p. 418]
{Liu and Tadmor 1998} as well as to the results obtained by the fourth-order
non-oscillatory scheme in \cite[p. 472]{Balaguer and Conde 2005}. Notice,
the fourth-order scheme \cite[p. 472]{Balaguer and Conde 2005} gives a
better resolution but, in contrast to the scheme COS2, can produce spurious
oscillations.

Let us also note that the number of multiplications and divisions per one
grid node in Scheme COS2, (\ref{SA30}), and in the second-order scheme
considered in \cite{Liu and Tadmor 1998} is approximately the same, but less
than this number in the third- and fourth-order schemes developed in 
\cite[p. 418]{Liu and Tadmor 1998}, \cite[p. 472]{Balaguer and Conde 2005},
respectively. Hence, the results depicted in Figure \ref{C1A5K8S} (right
column) is rather chipper (in terms of CPU time) than the ones demonstrated
in \cite[p. 418]{Liu and Tadmor 1998}, \cite[p. 472]{Balaguer and Conde 2005}%
. Along with loss of computational efficiency, simulations with low CFL
number can, in general, lead to excessive numerical smearing. As it is
demonstrated above, Scheme COS2 is free from such drawbacks.

\subsection{3-D axial symmetric gas dynamics}

We consider an adiabatic expansion of a gas plume into vacuum \cite{Anisimov
et al. 1993}, i.e., the so called Anisimov's problem. Taking into account
the symmetry of the plume with respect to the axis $z$, the gas-dynamic
equations can be written (for $0$ $<\ r$, $z$ $<$ $\infty $) as follows. 
\begin{equation}
\frac{\partial \rho }{\partial t}+\frac{1}{r}\frac{\partial \left( r\rho
v_{r}\right) }{\partial r}+\frac{\partial \left( \rho v_{z}\right) }{%
\partial z}=0,  \label{FA10}
\end{equation}
\begin{equation}
\frac{\partial }{\partial t}\left( \rho v_{r}\right) +\frac{1}{r}\frac{%
\partial }{\partial r}\left[ r\rho \left( v_{r}\right) ^{2}\right] +\frac{%
\partial }{\partial z}\left( \rho v_{z}v_{r}\right) +\frac{\partial p}{%
\partial r}=0,  \label{FA20}
\end{equation}
\begin{equation}
\frac{\partial }{\partial t}\left( \rho v_{z}\right) +\frac{\partial }{%
\partial z}\left[ \rho \left( v_{z}\right) ^{2}\right] +\frac{1}{r}\frac{%
\partial }{\partial r}\left( r\rho v_{z}v_{r}\right) +\frac{\partial p}{%
\partial z}=0,  \label{FA30}
\end{equation}
\begin{equation}
\frac{\partial \rho E}{\partial t}+\frac{1}{r}\frac{\partial }{\partial r}%
\left[ rv_{r}\left( \rho E+p\right) \right] +\frac{\partial }{\partial z}%
\left[ v_{z}\left( \rho E+p\right) \right] =0.  \label{FA40}
\end{equation}
\begin{equation}
\rho E=\frac{P}{\gamma -1}+0.5\rho v^{2},\ v^{2}=v_{r}^{2}+v_{z}^{2},\
\gamma =const.  \label{FA50}
\end{equation}
The initial conditions are the following (in details, see \cite{Anisimov et
al. 1993}) 
\begin{equation}
\rho =\rho \left( r,z\right) ,\ p\diagup \rho ^{\gamma }=const,\
v_{r}=v_{z}=0,\ r,z\geq 0,\ t=0.  \label{F160}
\end{equation}
At $r=0$ we assume that the axis $z$ is a reflection line. It prohibits any
normal flux of mass through the boundary $r=0$, i.e. 
\begin{equation}
v_{r}=0,\quad r=0,\ z\geq 0.  \label{FA55}
\end{equation}
Moreover, it is assumed that the pressure ($p$), density ($\rho $), and
tangential velocity ($v_{z}$) are even functions of normal distance to the
axis $z$ while the normal velocity ($v_{r}$) is an odd function of $r$. It
is also assumed that the plane $z=0$ is a reflection surface, i.e. the
pressure ($p$), density ($\rho $), and tangential velocity ($v_{r}$) are
even functions of normal distance above the target surface while the normal
velocity ($v_{z}$) is an odd function of $z$. The analytic solution to the
problem (\ref{FA10})-(\ref{FA55}) can be found in \cite{Anisimov et al. 1993}%
.

Notice, every point on the axis $r=0$ is a singular point for System (\ref
{FA10})-(\ref{FA40}). Assuming that all terms at (\ref{FA10}) are bounded
values at a vicinity of $r=0$, we find that $v_{r}\rightarrow 0$ as $%
r\rightarrow 0$. Hence 
\begin{equation}
\underset{r\rightarrow 0+0}{\lim }\frac{\left. \rho v_{r}\right| _{r>0}}{r}=%
\underset{r\rightarrow 0+0}{\lim }\frac{\left. \rho v_{r}\right|
_{r>0}-\left. \rho v_{r}\right| _{r=0}}{r}=\frac{\partial \left( \rho
v_{r}\right) }{\partial r}.  \label{FA69-10}
\end{equation}
In perfect analogy we obtain 
\begin{equation}
\frac{\rho \left( v_{r}\right) ^{2}}{r}\rightarrow \frac{\partial \rho
\left( v_{r}\right) ^{2}}{\partial r},\ \frac{\rho v_{z}v_{r}}{r}\rightarrow 
\frac{\partial \rho v_{z}v_{r}}{\partial r},\ \frac{v_{r}\left( \rho
E+p\right) }{r}\rightarrow \frac{\partial v_{r}\left( \rho E+p\right) }{%
\partial r},  \label{FA69-20}
\end{equation}
as $r\rightarrow 0$. By virtue of (\ref{FA69-10})-(\ref{FA69-20}), we obtain
from (\ref{FA10})-(\ref{FA40}) the following conditions at $r=0$: 
\begin{equation}
\frac{\partial \rho }{\partial t}+\frac{\partial }{\partial r}\left( 2\rho
v_{r}\right) +\frac{\partial }{\partial z}\left( \rho v_{z}\right) =0,
\label{FA69-30}
\end{equation}
\begin{equation}
\frac{\partial }{\partial t}\left( \rho v_{r}\right) +\frac{\partial }{%
\partial r}\left[ 2\rho \left( v_{r}\right) ^{2}+p\right] +\frac{\partial }{%
\partial z}\left( \rho v_{z}v_{r}\right) =0,  \label{FA69-40}
\end{equation}
\begin{equation}
\frac{\partial }{\partial t}\left( \rho v_{z}\right) +\frac{\partial }{%
\partial z}\left[ \rho \left( v_{z}\right) ^{2}+p\right] +\frac{\partial }{%
\partial r}\left( 2\rho v_{z}v_{r}\right) =0,  \label{FA69-50}
\end{equation}
\begin{equation}
\frac{\partial \rho E}{\partial t}+\frac{\partial }{\partial r}\left[
2v_{r}\left( \rho E+p\right) \right] +\frac{\partial }{\partial z}\left[
v_{z}\left( \rho E+p\right) \right] =0.  \label{FA69-60}
\end{equation}

In the analytic solution \cite{Anisimov et al. 1993} of the problem (\ref
{FA10})-(\ref{FA55}) the following input data are required: the initial
dimensions of the plume, $R_{0}$ and $Z_{0}$, its mass $M_{P}$, and the
initial energy $E_{P}$. We will use the following values as the reference
quantities: $l_{\ast }$ $=$ $R_{0}$, $v_{\ast }$ $=$ $\sqrt{\left( 5\gamma
-3\right) E_{P}\diagup M_{P}}$, $t_{\ast }$ $=$ $l_{\ast }\diagup v_{\ast }$%
, $\rho _{\ast }$ $=$ $M_{P}\diagup \left( R_{0}^{2}Z_{0}\right) $, $p_{\ast
}$ $=$ $\rho _{\ast }v_{\ast }^{2}$.

The equations (\ref{FA10})-(\ref{FA40}) are integrated up to $t=0.4$ with $%
\sigma \equiv Z_{0}\diagup R_{0}=0.1$. The CFL number is taken to be $Cr=1$.
It is assumed that the spatial increments are the following: $\Delta
r=0.0025 $, $\Delta z=0.00025$ if $0<t\leq 0.05$; $\Delta r=0.0025$, $\Delta
z=0.0005$ if $0.05<t\leq 0.1$; $\Delta r=0.005$, $\Delta z=0.001$ if $%
0.1<t\leq 0.2$; $\Delta r=0.01$, $\Delta z=0.002$ if $0.2<t\leq 0.4$. The
results of simulations as well as the analytical solution are depicted in
Figures \ref{DP1T0504}, \ref{MZR1T0504}. 
\begin{figure}[tbph]
\centerline{\includegraphics[width=12.75cm,height=16.5cm]{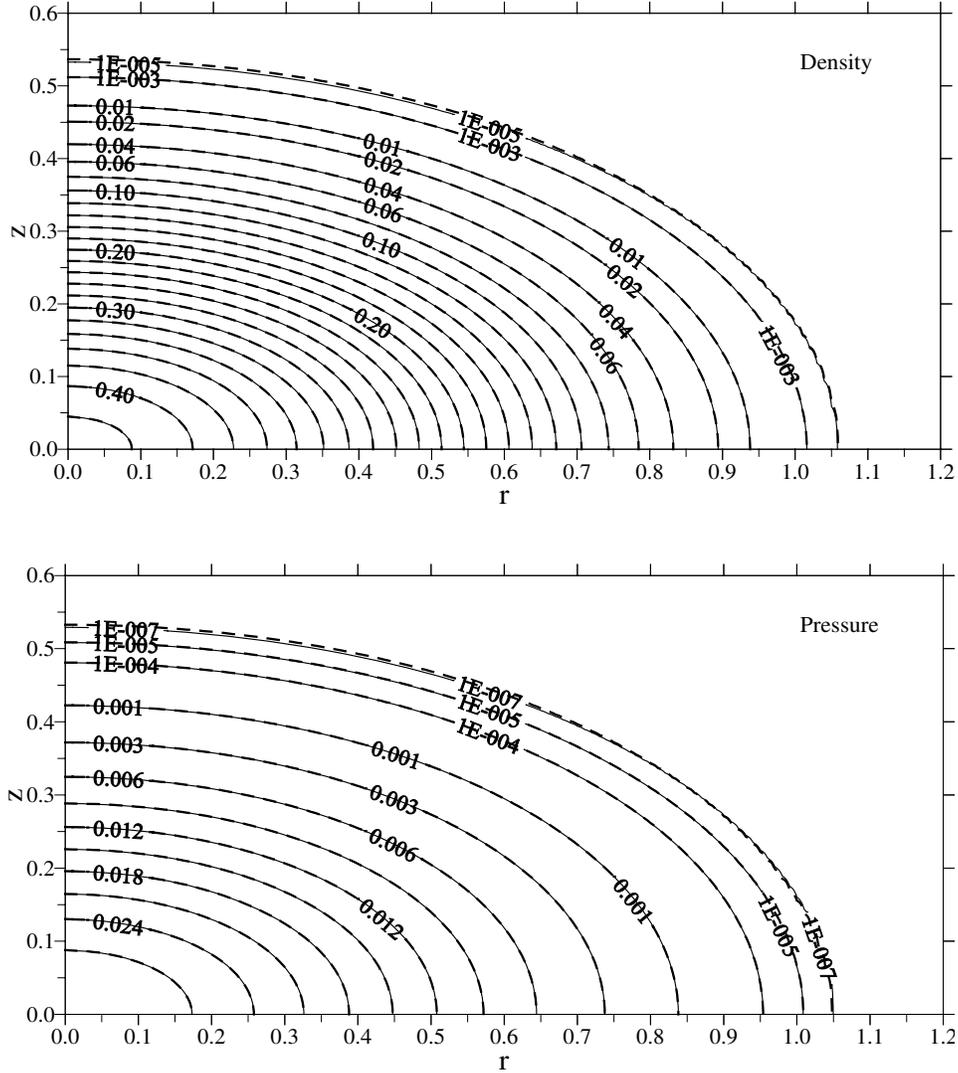}}
\caption{Anisimov's problem, density and pressure distribution. COS2 scheme
versus the analytical solution. $\protect\sigma \equiv Z_{0}\diagup
R_{0}=0.1 $, time $t=0.4$, CFL number $C_{r}=1$, monotonicity parameter $%
\aleph =\varkappa =1$, spatial increments: $\Delta r=0.0025$, $\Delta
z=0.00025$ if $0<t\leq 0.05$; $\Delta r=0.0025$, $\Delta z=0.0005$ if $%
0.05<t\leq 0.1$; $\Delta r=0.005$, $\Delta z=0.001$ if $0.1<t\leq 0.2$; $%
\Delta r=0.01$, $\Delta z=0.002$ if $0.2<t\leq 0.4$. Dashed lines: numerical
solution; Solid lines: analytical solution.}
\label{DP1T0504}
\end{figure}
\begin{figure}[tbph]
\centerline{\includegraphics[width=12.75cm,height=16.5cm]{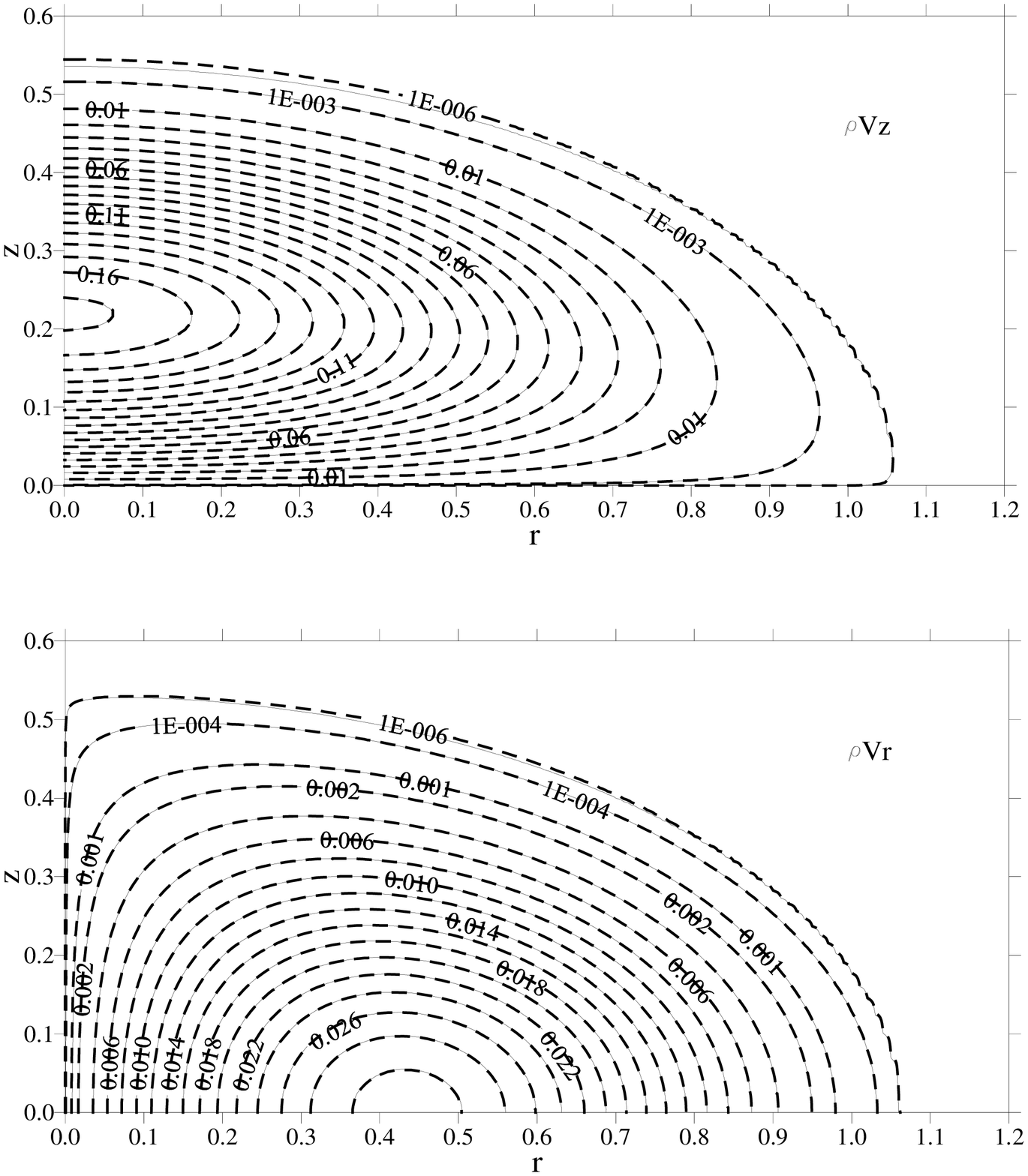}}
\caption{Anisimov's problem, momenta ($\protect\rho V_{z}$ \ and $\protect%
\rho V_{r}$) distribution. COS2 scheme versus the analytical solution. $%
\protect\sigma \equiv Z_{0}\diagup R_{0}=0.1$, time $t=0.4$, CFL number $%
C_{r}=1$, monotonicity parameter $\aleph =\varkappa =1$, spatial increments: 
$\Delta r=0.0025$, $\Delta z=0.00025$ if $0<t\leq 0.05$; $\Delta r=0.0025$, $%
\Delta z=0.0005$ if $0.05<t\leq 0.1$; $\Delta r=0.005$, $\Delta z=0.001$ if $%
0.1<t\leq 0.2$; $\Delta r=0.01$, $\Delta z=0.002$ if $0.2<t\leq 0.4$. Dashed
lines: numerical solution; Solid lines: analytical solution.}
\label{MZR1T0504}
\end{figure}
We observe (Figures \ref{DP1T0504}, \ref{MZR1T0504}) that the numerical and
analytical solutions are practically coincide, but in the vicinity of the
front, namely, for very small values of density.

\section{Appendix\label{Appendix1}}

\begin{proposition}
\label{Approximate derivative}Let us find the order of accuracy, $r$, in (%
\ref{C110}) if $d_{i}$ will be approximated by $\widetilde{d}_{i}$ with the
order of accuracy $s$, i.e. let 
\begin{equation}
d_{i}=\widetilde{d}_{i}+O\left( \left( \Delta x\right) ^{s}\right) .
\label{G90}
\end{equation}
Let $U\left( x\right) $ be sufficiently smooth, then we can write 
\begin{equation}
U_{i+1}=U_{i+05}+U_{i+05}^{\prime }\frac{\Delta x}{2}+\frac{1}{2}%
U_{i+05}^{\prime \prime }\left( \frac{\Delta x}{2}\right) ^{2}+O\left(
\left( \Delta x\right) ^{3}\right) ,  \label{HA300}
\end{equation}
\begin{equation}
U_{i}=U_{i+05}-U_{i+05}^{\prime }\frac{\Delta x}{2}+\frac{1}{2}%
U_{i+05}^{\prime \prime }\left( \frac{\Delta x}{2}\right) ^{2}+O\left(
\left( \Delta x\right) ^{3}\right) .  \label{HA310}
\end{equation}
Combining the equalities (\ref{HA300}) and \ref{HA310} we obtain 
\begin{equation}
U_{i+1}+U_{i}=2U_{i+05}+\left. \frac{\partial ^{2}U}{\partial x^{2}}\right|
_{i+05}\left( \frac{\Delta x}{2}\right) ^{2}+O\left( \left( \Delta x\right)
^{3}\right) .  \label{HA320}
\end{equation}
In a similar manner we write: 
\begin{equation}
d_{i+1}=U_{i+05}^{\prime }+U_{i+05}^{\prime \prime }\frac{\Delta x}{2}+\frac{%
1}{2}U_{i+05}^{\prime \prime \prime }\left( \frac{\Delta x}{2}\right)
^{2}+O\left( \left( \Delta x\right) ^{3}\right) ,  \label{HA330}
\end{equation}
\begin{equation}
d_{i}=U_{i+05}^{\prime }-U_{i+05}^{\prime \prime }\frac{\Delta x}{2}+\frac{1%
}{2}U_{i+05}^{\prime \prime \prime }\left( \frac{\Delta x}{2}\right)
^{2}+O\left( \left( \Delta x\right) ^{3}\right) .  \label{HA340}
\end{equation}
Subtracting the equations (\ref{HA330}) and (\ref{HA340}), we obtain 
\begin{equation}
\left. \frac{\partial ^{2}U}{\partial x^{2}}\right| _{i+05}=\frac{%
d_{i+1}-d_{i}}{\Delta x}+O\left( \left( \Delta x\right) ^{2}\right) .
\label{HA350}
\end{equation}
In view of (\ref{HA350}) and (\ref{G90}) we obtain from (\ref{HA320}) the
following interpolation formula 
\begin{equation}
U_{i+05}=\frac{1}{2}\left( U_{i+1}+U_{i}\right) -\frac{\Delta x}{8}\left( 
\widetilde{d}_{i+1}-\widetilde{d}_{i}\right) +O\left( \left( \Delta x\right)
^{4}+\left( \Delta x\right) ^{s+1}\right) .  \label{HA360}
\end{equation}
In view of (\ref{HA360}) we obtain that $r=\min \left( 4,s+1\right) .$
\end{proposition}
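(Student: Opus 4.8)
The plan is to establish the interpolation formula (\ref{C110}) and the exponent (\ref{C140}) by Taylor expansion about the cell midpoint $x_{i+0.5}$, first with the exact derivatives $d_i = U_i'$ and then by tracking how the approximation error (\ref{G90}) degrades the order of accuracy.

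First I would expand the nodal values $U_{i+1}$ and $U_i$ in Taylor series about $x_{i+0.5}$ up to third order in $\Delta x$. Adding the two expansions, the odd-order terms (those carrying the factor $\pm\Delta x/2$) cancel by symmetry, which leaves $U_{i+1}+U_i = 2U_{i+0.5} + U_{i+0.5}''(\Delta x/2)^2 + O((\Delta x)^{3})$; this both isolates $U_{i+0.5}$ and exposes the second-derivative term that must be eliminated. Next I would expand the exact nodal derivatives $d_{i+1}=U_{i+1}'$ and $d_i = U_i'$ about the same midpoint and subtract them, so that the first-derivative terms cancel and one obtains the centered representation $U_{i+0.5}'' = (d_{i+1}-d_i)/\Delta x + O((\Delta x)^{2})$. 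Substituting this into the previous identity replaces the unknown second derivative by the difference of the nodal derivatives; because $U_{i+0.5}''$ is multiplied by $(\Delta x/2)^2$, the $O((\Delta x)^{2})$ error of the centered formula contributes only at order $O((\Delta x)^{4})$, so the resulting formula with \emph{exact} derivatives carries an intrinsic error $O((\Delta x)^{4})$, i.e. $r=4$.

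The final step is to insert the approximate derivatives. Using (\ref{G90}) in the form $d_{i+1}-d_i = \widetilde d_{i+1}-\widetilde d_i + O((\Delta x)^{s})$, and noting that this difference appears multiplied by $\Delta x/8$ in (\ref{C110}), the derivative-approximation error enters the interpolation formula as $O((\Delta x)^{s+1})$ rather than $O((\Delta x)^{s})$. Combining this with the intrinsic $O((\Delta x)^{4})$ term gives a total error $O((\Delta x)^{4}+(\Delta x)^{s+1})$, whence $r=\min(4,s+1)$, which is (\ref{C140}).

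The computation is routine, so the main obstacle is bookkeeping rather than any single hard estimate. Two points require care: one must verify that the odd-order Taylor terms genuinely cancel upon adding $U_{i+1}+U_i$ and subtracting $d_{i+1}-d_i$ (this symmetry about $x_{i+0.5}$ is exactly what produces the extra order and hence the value $4$), and one must keep in clear view the factor $\Delta x$ multiplying the derivative term, since it is precisely this factor that promotes the $O((\Delta x)^{s})$ derivative error to an $O((\Delta x)^{s+1})$ contribution and thereby yields $s+1$ in (\ref{C140}). I would also record that the $O((\Delta x)^{4})$ intrinsic accuracy presupposes $U$ to be of class $C^{4}$ near $x_{i+0.5}$, which is guaranteed by the smoothness hypothesis on $U$.
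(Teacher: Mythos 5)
Your proposal is correct and follows essentially the same route as the paper's own derivation: Taylor expansions of $U_{i}$, $U_{i+1}$ and of the nodal derivatives about $x_{i+0.5}$, cancellation of odd terms by symmetry, substitution of the centered difference $\left( d_{i+1}-d_{i}\right) \diagup \Delta x$ for the second derivative (whose $O\left( \left( \Delta x\right) ^{2}\right) $ error is suppressed by the factor $\left( \Delta x\diagup 2\right) ^{2}$), and promotion of the $O\left( \left( \Delta x\right) ^{s}\right) $ derivative error to $O\left( \left( \Delta x\right) ^{s+1}\right) $ through the factor $\Delta x\diagup 8$, yielding $r=\min \left( 4,s+1\right) $. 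If anything you are slightly more careful than the paper, which records only an $O\left( \left( \Delta x\right) ^{3}\right) $ remainder in (\ref{HA320}) although the stated conclusion needs the $O\left( \left( \Delta x\right) ^{4}\right) $ remainder that your explicit appeal to the cancellation of the odd (cubic) Taylor terms supplies.
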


\begin{proposition}
\label{First to Second order scheme}Let us construct a second order scheme
based on operator-splitting techniques. We will, in fact, use the summarized
(summed) approximation method \cite[Section 9.3]{Samarskii 2001} to estimate
order of approximation. Consider the following equation 
\begin{equation}
\mathcal{P}\mathbf{u}\equiv \mathcal{P}_{1}\mathbf{u}+\mathcal{P}_{2}\mathbf{%
u}\equiv \frac{\partial \mathbf{u}}{\partial t}-L\mathbf{u}=0\mathbf{,\quad }%
\mathcal{P}_{k}\mathbf{u}\equiv \frac{1}{2}\frac{\partial \mathbf{u}}{%
\partial t}-L_{k}\mathbf{u},\ k=1,2,  \label{FS10}
\end{equation}
where $L_{k}$ is an operator, e.g. a differential operator, a real analytic
function, etc., acting on $\mathbf{u}\left( x,t\right) $. We approximate (%
\ref{FS10}) on the cell $\left[ x_{i-1},x_{i+1}\right] \times \left[
t_{n},t_{n+1}\right] $ by the following difference equation with the
accuracy $O(\left( \Delta x\right) ^{2}+\left( \Delta t\right) ^{2})$%
\begin{equation}
\Pi \mathbf{v}\equiv \frac{\mathbf{v}_{i}^{n+1}-\mathbf{v}_{i}^{n}}{\Delta t}%
-\Lambda _{1}\mathbf{v}^{n+0.5}-\Lambda _{2}\mathbf{v}^{n+0.5}=0,
\label{FS15}
\end{equation}
where it is assumed that the operator $L_{k}\mathbf{u}$ is approximated by
the operator $\Lambda _{k}\mathbf{u}$ with the accuracy $O(\left( \Delta
x\right) ^{2})$, i.e. 
\begin{equation}
\Lambda _{k}\mathbf{u}^{n+0.5}=\left( L_{k}\mathbf{u}\right)
_{i}^{n+0.5}+O\left( \left( \Delta x\right) ^{2}\right) .  \label{FS17}
\end{equation}
In view of the operator splitting idea, to the problem (\ref{FS15}) there
corresponds the following chain of difference schemes 
\begin{equation}
\Pi _{1}\mathbf{w}\equiv \frac{1}{2}\frac{\mathbf{w}_{i}^{n+0.5}-\mathbf{w}%
_{i}^{n}}{0.5\Delta t}-\Lambda _{1}\mathbf{w}_{1}^{n+0.5}=0,  \label{FS30}
\end{equation}
\begin{equation}
\Pi _{2}\mathbf{w}\equiv \frac{1}{2}\frac{\mathbf{w}_{i}^{n+1}-\mathbf{w}%
_{i}^{n+0.5}}{0.5\Delta t}-\Lambda _{2}\mathbf{w}_{2}^{n+0.5}=0.
\label{FS40}
\end{equation}
One can see from the above that the operator $\mathcal{P}_{k}\mathbf{u}$ is
approximated by $\Pi _{k}\mathbf{u}$ with the accuracy $O(\Delta t+\left(
\Delta x\right) ^{2})$%
\begin{equation}
\Pi _{1}\mathbf{u}_{i}^{n+0.5}=\left( \mathcal{P}_{1}\mathbf{u}\right)
_{i}^{n+0.5}-\frac{\Delta t}{8}\left( \frac{\partial ^{2}\mathbf{u}}{%
\partial t^{2}}\right) _{i}^{n+0.5}+O\left( \left( \Delta t\right)
^{2}+\left( \Delta x\right) ^{2}\right) ,  \label{FS60}
\end{equation}
\begin{equation}
\Pi _{2}\mathbf{u}_{i}^{n+0.5}=\left( \mathcal{P}_{2}\mathbf{u}\right)
_{i}^{n+0.5}+\frac{\Delta t}{8}\left( \frac{\partial ^{2}\mathbf{u}}{%
\partial t^{2}}\right) _{i}^{n+0.5}+O\left( \left( \Delta t\right)
^{2}+\left( \Delta x\right) ^{2}\right) .  \label{FS70}
\end{equation}
In view of (\ref{FS60})-(\ref{FS70}), the local truncation error \cite[p.
142]{LeVeque 2002}, $\psi $, on a sufficiently smooth solution $\mathbf{u}%
(x,t)$ to (\ref{FS10}) is found to be 
\begin{equation}
\psi =\Pi \mathbf{u}=\Pi _{1}\mathbf{u+}\Pi _{2}\mathbf{u}=  \label{FS75}
\end{equation}
\begin{equation}
\left( \mathcal{P}_{1}\mathbf{u+}\mathcal{P}_{2}\mathbf{u}\right)
_{i}^{n+0.5}+O\left( \left( \Delta t\right) ^{2}+\left( \Delta x\right)
^{2}\right) =O\left( \left( \Delta t\right) ^{2}+\left( \Delta x\right)
^{2}\right) .  \label{FS80}
\end{equation}
Thus, the implicit scheme, (\ref{FS30}), together with the explicit scheme, (%
\ref{FS40}), approximate (\ref{FS10}) with the second order.
\end{proposition}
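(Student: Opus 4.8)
The plan is to compute the local truncation error of the split scheme (\ref{FS30})--(\ref{FS40}) by the \emph{summed (summarized) approximation} method \cite[Section 9.3]{Samarskii 2001}. The conceptual device is that one does \emph{not} require either half-step operator $\Pi_{1}$ or $\Pi_{2}$ to be a second-order approximation of the corresponding $\mathcal{P}_{k}$ (they are only $O(\Delta t+(\Delta x)^{2})$ consistent); instead I would show that the two leading $O(\Delta t)$ temporal errors are equal and opposite, so that they annihilate once the schemes are summed along the chain. Everything is evaluated on a fixed sufficiently smooth exact solution $\mathbf{u}(x,t)$ of (\ref{FS10}), expanding about the common intermediate node $t_{n+0.5}$.

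First I would substitute $\mathbf{u}$ into $\Pi_{1}$ and Taylor-expand $\mathbf{u}_{i}^{n}$ about $t_{n+0.5}$ through a backward half-step $0.5\Delta t$, obtaining $(\mathbf{u}_{i}^{n+0.5}-\mathbf{u}_{i}^{n})/\Delta t=\tfrac{1}{2}\mathbf{u}_{t}-\tfrac{\Delta t}{8}\mathbf{u}_{tt}+O((\Delta t)^{2})$, with all derivatives taken at $(x_{i},t_{n+0.5})$. Combining this with the spatial consistency hypothesis (\ref{FS17}), namely $\Lambda_{1}\mathbf{u}^{n+0.5}=(L_{1}\mathbf{u})_{i}^{n+0.5}+O((\Delta x)^{2})$, and recalling $\mathcal{P}_{1}\mathbf{u}=\tfrac{1}{2}\mathbf{u}_{t}-L_{1}\mathbf{u}$, yields (\ref{FS60}). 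Repeating the expansion for $\Pi_{2}$, this time pushing $\mathbf{u}_{i}^{n+1}$ forward by a half-step about $t_{n+0.5}$, flips the sign of the $\mathbf{u}_{tt}$ contribution and produces (\ref{FS70}) with $+\tfrac{\Delta t}{8}\mathbf{u}_{tt}$.

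The decisive step is then purely algebraic. Since the intermediate value telescopes, $\Pi\mathbf{u}=\Pi_{1}\mathbf{u}+\Pi_{2}\mathbf{u}$, and adding (\ref{FS60}) to (\ref{FS70}) cancels the two $\tfrac{\Delta t}{8}\mathbf{u}_{tt}$ terms, leaving $\psi=(\mathcal{P}_{1}\mathbf{u}+\mathcal{P}_{2}\mathbf{u})_{i}^{n+0.5}+O((\Delta t)^{2}+(\Delta x)^{2})$ as in (\ref{FS75})--(\ref{FS80}). Because $\mathbf{u}$ solves the full equation, $(\mathcal{P}_{1}+\mathcal{P}_{2})\mathbf{u}=\mathcal{P}\mathbf{u}=0$, whence $\psi=O((\Delta t)^{2}+(\Delta x)^{2})$ and the composite scheme is second-order accurate.

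The point I expect to require the most care is conceptual rather than computational: each component scheme is only $O(\Delta t+(\Delta x)^{2})$ consistent with its own operator $\mathcal{P}_{k}$, so one must resist concluding that the method is merely first order. What makes the cancellation work is the symmetric placement of the half-steps about the common node $t_{n+0.5}$, which forces the leading temporal errors to carry the \emph{same} $\mathbf{u}_{tt}$ with opposite signs; this in turn requires $\mathbf{u}$ to be smooth enough (bounded third time derivative) for the expansions to be valid and for $\mathbf{u}_{tt}$ to denote the same quantity in both estimates. I would also verify explicitly that evaluating the chain on the exact solution forces $\mathbf{w}_{1}^{n+0.5}=\mathbf{w}_{2}^{n+0.5}=\mathbf{u}_{i}^{n+0.5}$, which is precisely what permits the two spatial operators to be recombined into $L=L_{1}+L_{2}$ at the same space-time point.
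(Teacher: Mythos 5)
Your proposal is correct and takes essentially the same route as the paper: Taylor expansion of the two half-steps about the common node $t_{n+0.5}$ produces the equal-and-opposite $\mp\tfrac{\Delta t}{8}\mathbf{u}_{tt}$ leading errors of (\ref{FS60})--(\ref{FS70}), which cancel when the chain is summed, exactly as in the paper's summed-approximation argument yielding (\ref{FS75})--(\ref{FS80}). Your additional remark that, on the exact solution, both intermediate values coincide with $\mathbf{u}_{i}^{n+0.5}$ merely makes explicit a step the paper leaves implicit.
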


\end{document}